\documentclass[a4paper,11pt]{article}
\usepackage{graphicx,amssymb,amsmath,amsthm,textcomp}
\input{psfig.sty}
\usepackage{color}
\usepackage{appendix}
\setlength{\parskip}{1.2mm}
\setlength{\parindent}{0pt}
\setlength{\textheight}{8.7in}
\setlength{\textwidth}{5.8in}

\setlength{\oddsidemargin}{-0.1in}
\setlength{\topmargin}{-0.1in}
\newtheorem{theorem}{Theorem}
\newtheorem{lemma}{Lemma}

\newtheorem{definition}{Definition}
\newtheorem{observation}{Observation}

\newtheorem{fact}{Fact}

\newcommand{\remove}[1]{}

\newcommand{\IR}{\mathbb{R}}

\usepackage{amsfonts}

\title{Corrigendum to: ``Linear time algorithm to cover and hit a set of
line segments optimally by two axis-parallel squares'', Theoretical Computer Science 769 (2019) 63--74}
\author{Sanjib Sadhu$^1$ \and Xiaozhou He$^2$ \and Sasanka Roy$^3$
\and  Subhas C. Nandy$^3$ \and Suchismita Roy$^1$}
\date{$^1$Dept. of CSE, National Institute of Technology Durgapur, India\\
$^2$ Business School, Sichuan University, Chengdu, China\\
$^3$Indian Statistical Institute, Kolkata, India}

\begin{document}

\maketitle

{\bf Abstract:} 
In the paper ``Linear time algorithm to cover and hit a set of
line segments optimally by two axis-parallel squares'', TCS Volume 769 (2019), pages 63--74, the LHIT problem is proposed as follows: 
\begin{description}
\item[] For a given set of non-intersecting line segments 
${\cal L} = \{\ell_1, \ell_2, \ldots, \ell_n\}$ in $I\!\!R^2$, compute two axis-parallel congruent 
squares ${\cal S}_1$ and ${\cal S}_2$ of minimum size whose union hits all the line segments in $\cal L$,
\end{description}
and a linear time algorithm was proposed. Later it was observed that the algorithm has a bug. In this corrigendum, we corrected the algorithm. The time complexity of the corrected algorithm is 
$O(n^2)$. 

{\bf Keywords:} Two-center problem, hitting line segments by two axis-parallel squares

\section{Introduction}
For a given set of line segments ${\cal L} = \{\ell_1, \ell_2, \ldots, \ell_n\}$ in $I\!\!R^2$, 
 the following two problems were proposed in \cite{sadhu}:
 \begin{description}
\item[Line segment covering (LCOVER) problem:] Given a set ${\cal L}=
\{\ell_1,\ell_2,\ldots,\ell_n\}$ of $n$ line segments (possibly intersecting) 
in $\IR^2$, compute two  congruent 
squares ${\cal S}_1$ and ${\cal S}_2$ of minimum size whose union covers all the members in $\cal L$.
\item[Line segment hitting (LHIT) problem:] Given a set ${\cal L}=
\{\ell_1,\ell_2,\ldots,\ell_n\}$ of $n$ non-intersecting line segments 
in $\IR^2$, compute two axis-parallel congruent 
squares ${\cal S}_1$ and ${\cal S}_2$ of minimum size whose union hits all the line segments in $\cal L$.
 \end{description}

 For both the problems, linear time algorithms were proposed. 
 Later, we identified that there is a bug in the proposed algorithm for the LHIT problem. 
In this corrigendum, we present a revised algorithm for the LHIT problem.
The time complexity of this algorithm is $O(n^2)$ in the worst case.

\begin{figure}
\centering
\includegraphics[width=0.7\textwidth]{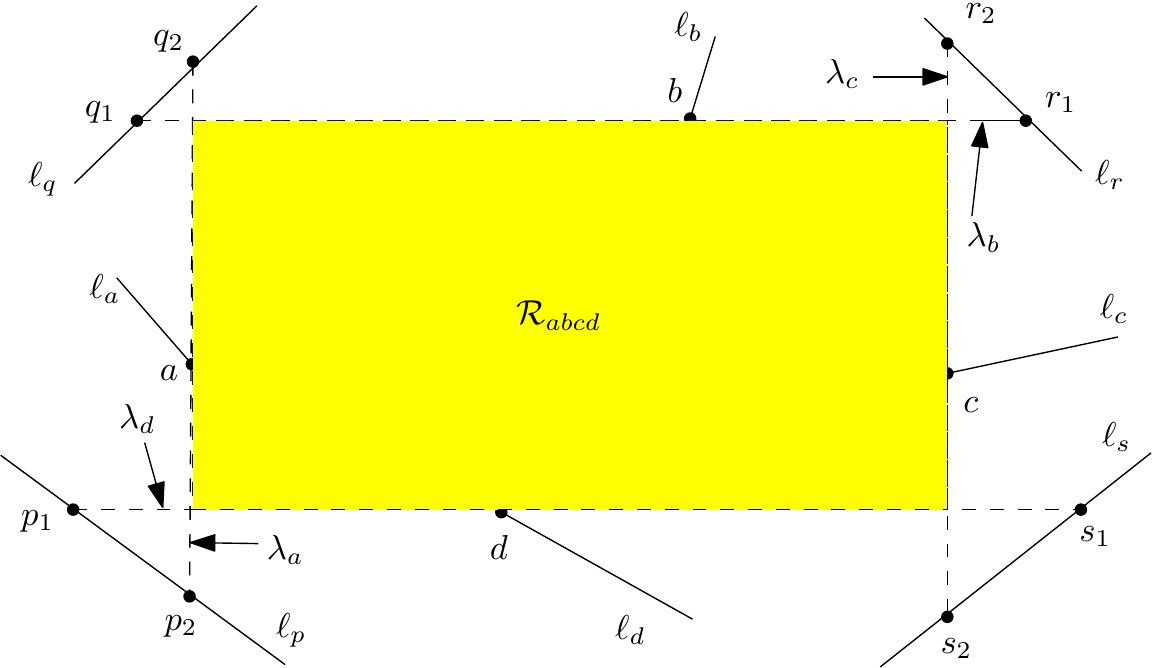}
\caption{The axis parallel rectangle ${\cal R}_{abcd}$ defined by the
points $a$, $b$, $c$ and $d$ that does not hit all the members in ${\cal L}$.}
\label{degenerated_fig1}
\end{figure} 

{\em An axis parallel rectangle $\cal R$ is called a {\it hitting rectangle} if every member
in $\cal L$ is either intersected by ${\cal R}$ or is completely contained in ${\cal R}$.}
In \cite{sadhu}, we performed a linear scan among the objects in $\cal L$ to identify four 
points $a$, $b$, $c$ and $d$, where $a$ is the right end-point of 
a segment $\ell_a \in \cal L$ having minimum $x$-coordinate, $b$ is the bottom end-point 
of a segment $\ell_b \in \cal L$ having maximum $y$-coordinate, $c$ is the left end-point 
of a segment $\ell_c \in \cal L$ having maximum $x$-coordinate, and $d$ is the top 
end-point of a segment $\ell_d\in \cal L$ having minimum $y$-coordinate (see Figure~\ref{degenerated_fig1}). The axis-parallel 
rectangle  whose ``left'', ``top'', ``right'' and ``bottom'' sides contain  the points $a$, 
$b$, $c$ and $d$ respectively, is denoted by ${\cal R}_{abcd}$. In \cite{sadhu}, we claimed that this 
axis-parallel rectangle ${\cal R}_{abcd}$ is a {\it hitting rectangle}. Using this rectangle, 
we computed two congruent squares of minimum size that hits all the line segments in $\cal L$. 
Later, we observed that ${\cal R}_{abcd}$ is not always a hitting rectangle (see Figure 
\ref{degenerated_fig1}). Thus, the proposed algorithm for the LHIT problem may fail in some 
pathological cases. In this corrigendum, we correct our mistake. As in \cite{sadhu}, 
we first compute ${\cal R}_{abcd}$. If it hits all the segments in $\cal L$, our 
proposed linear time algorithm in \cite{sadhu} will work for the LHIT problem. However, if 
${\cal R}_{abcd}$ does not hit all the segments in $\cal L$, we propose an $O(n^2)$ time 
algorithm for the LHIT problem.

As mentioned earlier, the members in $\cal L$ are non-intersecting. We use the following notations 
to describe our revised algorithm. Here, $\lambda_a$, $\lambda_b$, $\lambda_c$ and $\lambda_d$ denote  
the lines containing the left, top, right and bottom boundaries of ${\cal R}_{abcd}$ respectively.
Let $\ell_p$ be the segment which is not hit by ${\cal R}_{abcd}$ and lies farthest from both 
``$a$'' and ``$d$'' along vertically downward and horizontally leftward directions respectively.
Similarly the other segments $\ell_q$, $\ell_r$ and $\ell_s$ are defined (see 
Figure~\ref{degenerated_fig1}). Let $(p_1,p_2)$ be the two points of intersection 
of $\ell_p$ with $\lambda_a$ and $\lambda_d$ 
respectively. Similarly the point-pairs $(q_1,q_2)$, $(r_1,r_2)$ and
$(s_1,s_2)$ are defined (see Figure \ref{degenerated_fig1}).
Note that, all the segments $\ell_p, \ell_q, \ell_r, \ell_s$ may not exist. 
However, if at least one of these four 
segments exists, then our proposed algorithm in \cite{sadhu} will fail.

We first propose an algorithm for computing a minimum sized axis parallel square $\cal S$ 
that hits a given set of line segments $\cal L$. We use this result to compute 
the two axis parallel congruent squares ${\cal S}_1$ and
${\cal S}_2$ of minimum size for hitting all the segments in $\cal L$.

\section{One hitting square}\label{one-hit}

\begin{figure}[t]
\begin{minipage}[b]{0.5\linewidth}
\centering
\includegraphics[width=0.7\textwidth]{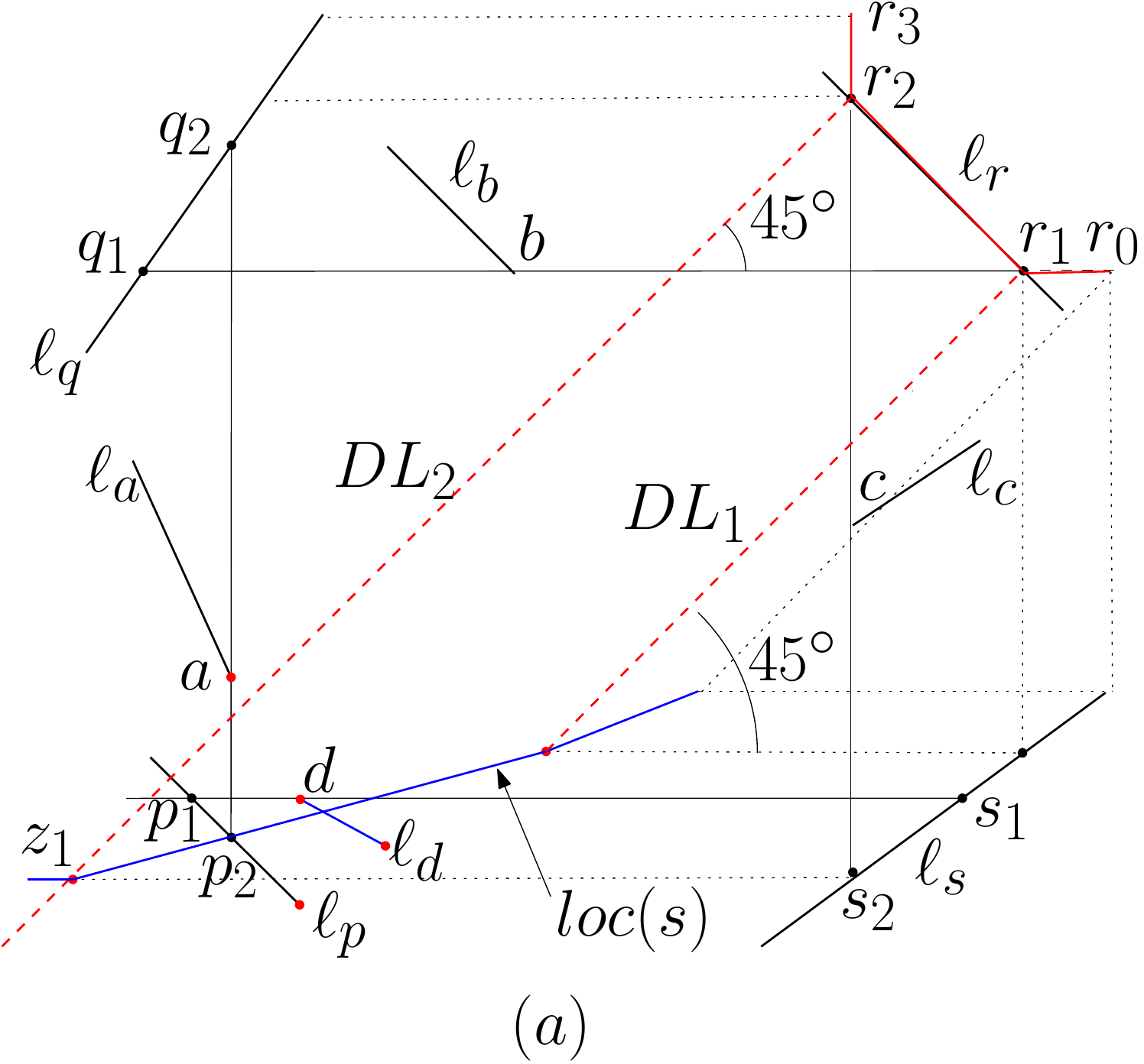}
\end{minipage} 
\hspace{0.005cm}
\begin{minipage}[b]{0.5\linewidth}
\centering
\includegraphics[width=.8\textwidth]{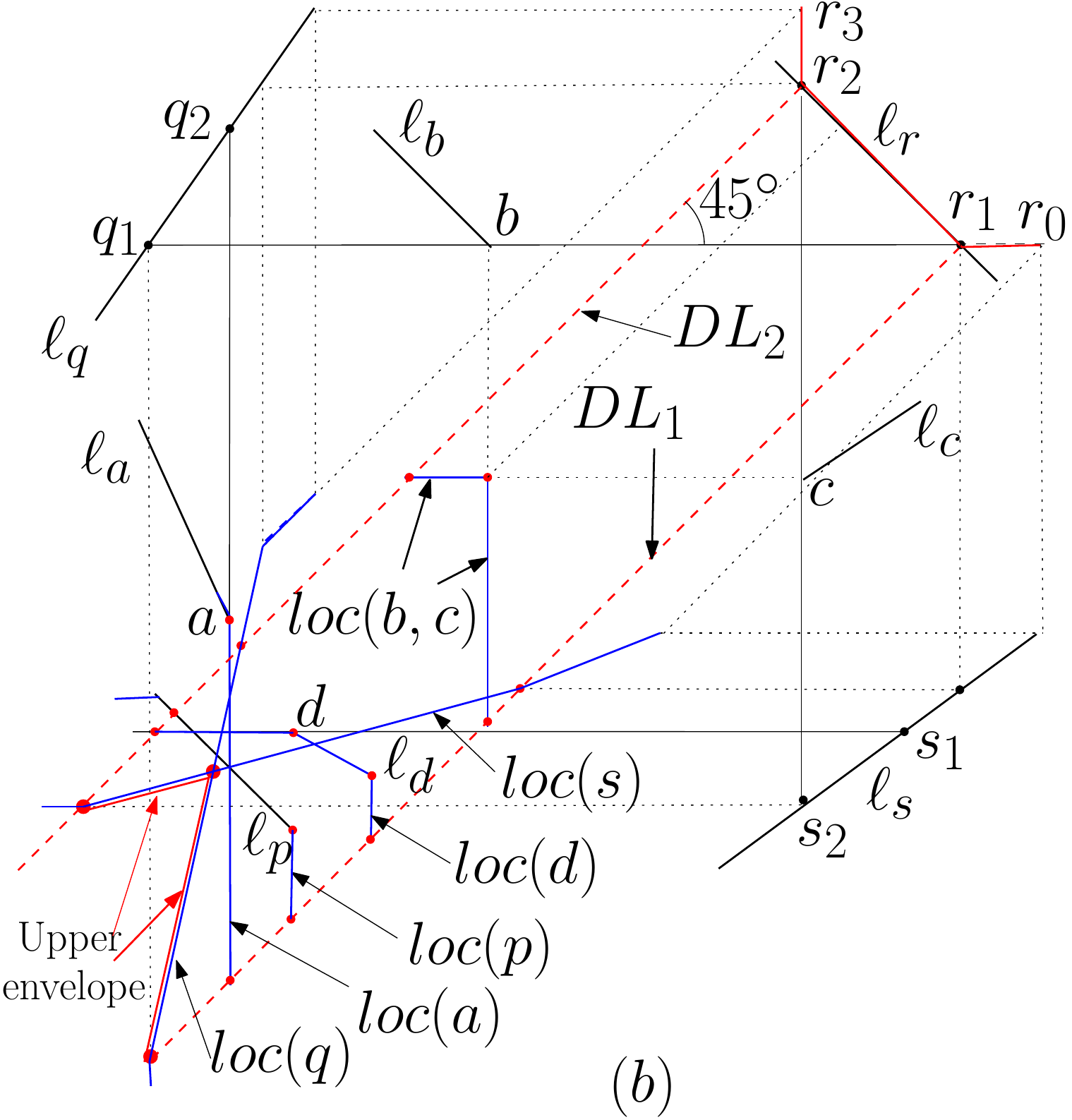}
\end{minipage}  
\caption{(a) Computation of $loc(s)$ 
(b) Computation of a minimum sized axis parallel square that hits all the segments}
\label{degenerated_fig11}
\end{figure}

\begin{fact}
\label{simple}
A square, that hits $\ell_a$, $\ell_b$, $\ell_c$, $\ell_d$, $\ell_p$, $\ell_q$, $\ell_r$
 and $\ell_s$ (those which exists), will hit all the segments in $\cal L$.
\end{fact}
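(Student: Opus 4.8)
The plan is to first reduce the claim to the segments that ${\cal R}_{abcd}$ fails to hit, and then handle the four ``corner'' families one at a time.

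First I would show that any axis-parallel square $S$ hitting $\ell_a,\ell_b,\ell_c,\ell_d$ must in fact contain the whole rectangle ${\cal R}_{abcd}$. Write $S=[X,X+w]\times[Y,Y+w]$ and let $x_a,x_c,y_b,y_d$ denote the coordinates of the four defining sides, so that $\lambda_a,\lambda_c$ are the vertical lines $x=x_a,\,x=x_c$ and $\lambda_b,\lambda_d$ the horizontal lines $y=y_b,\,y=y_d$. Since $a$ is the rightmost point of $\ell_a$, every point of $\ell_a$ has $x$-coordinate at most $x_a$, so $S\cap\ell_a\neq\emptyset$ forces $X\le x_a$. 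Symmetrically, hitting $\ell_c$ (whose leftmost point is $c$) forces $X+w\ge x_c$, hitting $\ell_b$ forces $Y+w\ge y_b$, and hitting $\ell_d$ forces $Y\le y_d$. Hence $[x_a,x_c]\times[y_d,y_b]\subseteq S$, i.e. ${\cal R}_{abcd}\subseteq S$. Consequently $S$ already hits every segment that is hit by ${\cal R}_{abcd}$, and it only remains to treat the segments that ${\cal R}_{abcd}$ does \emph{not} hit.

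Next I would classify those segments. A one-line check shows that every segment of $\cal L$ has its $x$-projection meeting $[x_a,x_c]$ and its $y$-projection meeting $[y_d,y_b]$ (the right endpoint of any segment has $x\ge x_a$ while its left endpoint has $x\le x_c$, and symmetrically in $y$). Therefore a segment missed by ${\cal R}_{abcd}$ cannot be separated from it by any axis-parallel line; tracking where it acquires a point with $x\in[x_a,x_c]$ versus a point with $y\in[y_d,y_b]$ shows that it must pass around exactly one corner of ${\cal R}_{abcd}$, crossing the two boundary lines of that corner within its own extent. This is precisely the four families recorded by $\ell_p,\ell_q,\ell_r,\ell_s$ with their crossing pairs $(p_1,p_2),\dots,(s_1,s_2)$; e.g. a segment routed past the bottom-left corner meets $\lambda_a$ below the level of $d$ and meets $\lambda_d$ to the left of the level of $a$.

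Finally I would prove the domination inside one family, say the bottom-left one. Using that the members of $\cal L$ are pairwise non-intersecting, I would first argue that the bottom-left cutters are linearly ordered (nested): if one cutter crossed $\lambda_a$ lower but $\lambda_d$ farther right than another, the two segments would be forced to cross, a contradiction. Thus the farthest cutter $\ell_p$ simultaneously attains the lowest $\lambda_a$-crossing $p_1$ and the leftmost $\lambda_d$-crossing $p_2$, and every other bottom-left cutter $\ell$ lies strictly between $\ell_p$ and the corner. Now suppose $S\supseteq{\cal R}_{abcd}$ and $S$ hits $\ell_p$ at a point $z$, which therefore lies on the far side of $\ell$ from the corner. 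Since $S$ is convex and contains both the corner $(x_a,y_d)$ of ${\cal R}_{abcd}$ and the point $z$, the straight chord from $(x_a,y_d)$ to $z$ lies in $S$ and must cross $\ell$; hence $S$ hits $\ell$. Repeating for the other three corners (with $\ell_q,\ell_r,\ell_s$) and combining with the reduction above shows that $S$ hits every segment of $\cal L$.

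I expect the crux to be this last step: guaranteeing that the convex chord meets the finite segment $\ell$, not merely its supporting line. This is where the extremal choice of $\ell_p$ is essential — one must check that the point $z\in\ell_p\cap S$ can be taken on the near-corner portion of $\ell_p$ (the arc between $p_1$ and $p_2$), so that $z$ and the corner $(x_a,y_d)$ genuinely lie on opposite sides of $\ell$ inside the triangular region bounded by $\lambda_a$, $\lambda_d$ and $\ell$. The degenerate cases (a cutter through a corner point, coincident crossings, or some of $\ell_p,\ell_q,\ell_r,\ell_s$ absent) should be routine once this separation is in place.
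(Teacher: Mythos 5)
Your proof follows essentially the same route as the paper's: any square hitting $\ell_a,\ell_b,\ell_c,\ell_d$ must contain ${\cal R}_{abcd}$, any segment missed by ${\cal R}_{abcd}$ must round one of its corners by crossing the two adjacent boundary lines outside the rectangle, and the extremal cutter for that corner ($\ell_p$, etc.) dominates the others. You in fact supply more detail than the paper, which simply asserts the final step (``in order to hit $\ell_p$, $\cal R$ must hit $\ell$''); the verification you flag as outstanding does go through, since if $S$ meets $\ell_p$ only off the arc $\overline{p_1p_2}$, then $S$, being an axis-parallel box containing both that point and ${\cal R}_{abcd}$, already contains $p_1$ or $p_2$.
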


\begin{proof}
Let $\cal R$ be a square that hit all the segments in $\{\ell_a,\ell_b,\ell_c,\ell_d,\ell_p,\ell_q,\ell_r, 
\ell_s\}$, and $\ell \in {\cal L}\setminus \{\ell_a,\ell_b,\ell_c,\ell_d,\ell_p,\ell_q,\ell_r, 
\ell_s\}$ be a segment that is not hit by $\cal R$. 
The square $\cal R$ must cover ${\cal R}_{abcd}$ (Figure~\ref{degenerated_fig1}). So by our assumption, 
$\ell$ must not intersect ${\cal R}_{abcd}$. From the definition of the distinguished points
``$a$'', ``$b$'', ``$c$'' and ``$d$'', the segment $\ell$ must intersect
both the members of at least one of the tuples 
$(\lambda_a,\lambda_b)$, $(\lambda_b,\lambda_c)$ and 
$(\lambda_c,\lambda_d)$, and $(\lambda_a,\lambda_d)$ outside ${\cal R}_{abcd}$. Without loss of 
generality, assume that $\ell$ hits $(\lambda_a,\lambda_d)$.
In order to hit $\ell_p$ by $\cal R$, it must hit $\ell$. Thus,
we have the contradiction. 
\end{proof}

{\bf Implication of Fact \ref{simple}:} The minimum size square hitting all
the segments in a given set $\cal L$ is defined by 
at most eight segments $\{\ell_a,\ell_b,\ell_c,\ell_d,\ell_p,\ell_q,\ell_r, 
\ell_s\}$ of $\cal L$.

\begin{observation}
\label{basic}
(i) The subset of $\cal L$ defining the possible minimum size squares hitting 
all the segments in $\cal L$ (if more than one such squares exist) is unique.\\
(ii) If $\cal S$ is the {\em minimum sized axis parallel square}  that hits all the line segments in 
$\cal L$, then at least one of the vertices of $\cal S$ will lie on one of the four segments 
$\overline{p_1p_2}$, $\overline{q_1q_2}$, $\overline{r_1r_2}$ and $\overline{s_1s_2}$.
\end{observation}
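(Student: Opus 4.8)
The plan is to treat both parts through the structural consequence of Fact~\ref{simple}: a square hits all of $\cal L$ if and only if it hits the distinguished set $D=\{\ell_a,\ell_b,\ell_c,\ell_d,\ell_p,\ell_q,\ell_r,\ell_s\}$ (those that exist), equivalently if and only if it contains ${\cal R}_{abcd}$ and hits each existing $\ell_p,\ell_q,\ell_r,\ell_s$. For part~(i) I would simply observe that every member of $D$ is singled out by an extremal rule applied to $\cal L$: the points $a,b,c,d$ by the min/max-coordinate endpoints, and $\ell_p,\ell_q,\ell_r,\ell_s$ as the \emph{farthest} unhit segments in the four corner directions. Under the paper's general-position/tie-breaking convention each such extremum is attained by a unique segment, so $D$ is determined by $\cal L$ alone and does not depend on which optimal square one looks at. Since, by the implication of Fact~\ref{simple}, every minimum hitting square of $\cal L$ is exactly a minimum hitting square of $D$, the defining subset is the single set $D$, even when several optimal squares coexist.

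For part~(ii) I would set up coordinates so that $\lambda_a,\lambda_c$ are the vertical lines $x=x_a,\,x=x_c$ and $\lambda_d,\lambda_b$ the horizontal lines $y=y_d,\,y=y_b$, with $x_a<x_c$ and $y_d<y_b$. Writing a candidate square as $[X,X{+}L]\times[Y,Y{+}L]$, the requirement that it hit $\ell_a,\ell_b,\ell_c,\ell_d$ forces $X\le x_a$, $X{+}L\ge x_c$, $Y\le y_d$, $Y{+}L\ge y_b$, i.e. $\cal S\supseteq{\cal R}_{abcd}$, reproving the covering statement used in Fact~\ref{simple}. The heart of the argument is a local hitting lemma for the corner segments: for the south-west segment $\ell_p$, a covering square hits $\ell_p$ if and only if its south-west vertex $(X,Y)$ lies on or to the south-west of the line through $p_1,p_2$, and the threshold case---$(X,Y)$ on that line---puts the vertex \emph{on the chord} $\overline{p_1p_2}$ itself. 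I would prove this by noting that $p_1\in\lambda_a$ and $p_2\in\lambda_d$ both lie in the coordinate ranges already spanned by any covering square, so the chord can escape the square only by being cut off at the south-west corner; and if $(X,Y)$ is on the line with $X\le x_a,\,Y\le y_d$, the negative slope of $\overline{p_1p_2}$ forces $x_{p_2}\le X$ and $y_{p_1}\le Y$, i.e. the vertex sits between the endpoints. Three symmetric statements handle $\ell_q,\ell_r,\ell_s$ and the NW/NE/SE vertices.

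With this dictionary, hitting each existing corner segment becomes a half-plane constraint on $(X,Y)$ whose boundary is exactly ``a vertex on a chord.'' For the minimum square I would argue by non-shrinkability. At side length $L$ the covering constraints confine $(X,Y)$ to the box $B(L)=[x_c{-}L,x_a]\times[y_b{-}L,y_d]$, and the feasible placements are $P(L)=B(L)$ intersected with the corner half-planes. The key point is that each corner half-plane is genuinely cutting: the placement that pushes the relevant vertex as far \emph{inward} as the covering box allows (for $\ell_p$, the choice $(X,Y)=(x_a,y_d)$, which puts the SW vertex at the SW corner of ${\cal R}_{abcd}$) fails to hit its segment, since that segment lies strictly outside ${\cal R}_{abcd}$. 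Hence, at the optimal side $L^\ast$, some corner half-plane is active on $P(L^\ast)$: otherwise all corner constraints would hold strictly throughout $B(L^\ast)$, and by continuity they would still be satisfiable on the smaller box $B(L^\ast{-}\varepsilon)$, contradicting minimality of $L^\ast$. An active corner constraint is precisely a vertex lying on the corresponding chord, which proves~(ii).

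\textbf{Main obstacle.} The subtle point is that the conclusion is really about the existence of an optimal placement with a vertex on a chord rather than about \emph{every} optimal square: a minimum-size square can \emph{over-reach}, hitting $\ell_p$ in the interior of an edge (for instance at $p_1$) with no vertex on any chord. The argument above sidesteps this by locating the tight placement on the boundary of $P(L^\ast)$; the real work is in verifying that the corner half-planes are non-redundant (the ``innermost'' placement misses its segment) and that activeness of a corner constraint lands the vertex on the chord \emph{segment}, not merely on its supporting line. I expect the geometric hitting lemma, together with this non-redundancy check, to be the only genuinely delicate step; once it is in place, (ii) is immediate from the shrinking argument and (i) is essentially a definitional uniqueness statement.
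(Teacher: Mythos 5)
Your proof of part~(ii) is a genuinely different route from the paper's (the paper simply translates a minimum square, without changing its size and without losing any hit segment, until a vertex lands on one of the chords), and your half-plane formulation of the corner constraints, including the check that an active constraint puts the vertex on the chord \emph{segment} rather than merely on its supporting line, is correct and arguably cleaner. But the shrinking argument has a concrete gap: the contradiction ``$B(L^\ast-\varepsilon)$ would still be feasible'' evaporates when $L^\ast=\max(x_c-x_a,\,y_b-y_d)$, i.e.\ when the covering constraints alone pin down the optimal side length, because then $B(L^\ast-\varepsilon)=\emptyset$ for reasons having nothing to do with the corner half-planes. In that case no corner constraint need be active at a given optimal placement, yet the claim still has to be established (some $\ell_p,\ldots,\ell_s$ exists by hypothesis, since ${\cal R}_{abcd}$ is not a hitting rectangle). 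Your non-redundancy check (the innermost placement $(x_a,y_d)$ misses $\ell_p$) does not repair this, since non-redundancy over $B(L)$ does not imply activeness on $P(L^\ast)$. The missing step is exactly the paper's argument: slide $(X,Y)$ within $P(L^\ast)$ (e.g.\ maximize $Y$, then $X$) until some corner half-plane becomes tight; one must check this happens before the box constraints stop the motion, which follows because each chord lies in the quadrant cut off by the corresponding corner of ${\cal R}_{abcd}$. Note also that, like the paper, you end up proving only the existence of \emph{an} optimal placement with a vertex on a chord, not the property for every optimal square --- which is all that is needed downstream, but worth stating.

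For part~(i) you have proved a different and essentially tautological statement. The paper's ``defining subset'' is not the canonical eight-element set $D=\{\ell_a,\ldots,\ell_s\}$; it is the set of two or three segments that actually touch the boundary of a minimum square externally and thereby determine its size and position. The content of part~(i), and what is invoked later (``all such ${\cal S}_2$'s are defined by the same subset of segments''), is that \emph{every} minimum hitting square has the same such tight subset --- proved in the paper by a case analysis on whether the two defining segments touch opposite sides or diagonal vertices, or whether there are three defining segments. Observing that $D$ is determined by extremal rules on $\cal L$ does not address this: two distinct minimum squares could a priori be tight against different pairs drawn from $D$. You would need to add the paper's (or an equivalent) span argument showing that the horizontal/vertical (or diagonal) separation realized by the defining pair forces any other minimum square to be tight against the same pair.
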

\begin{proof}
{\bf \boldmath part (i)}: A minimum sized square ${\cal S}$ hitting all the segments is defined by either two or three 
segments which are termed as the defining segments for ${\cal S}$.\\
(a) If the number of defining segments of ${\cal S}$ is two, then those two segments must
touch the two opposite boundaries (left, right) or (top, bottom) of ${\cal S}$,
or two  diagonal vertices of ${\cal S}$. The defining segments must touch the
boundary of square ${\cal S}$ externally i.e. from outside,
otherwise ${\cal S}$ can be further reduced.
\begin{itemize}
 \item {\bf \boldmath Two defining segments touch the two opposite sides of the square ${\cal S}$}:\\
 Here, the maximum of ``minimum horizontal
distance'' and ``minimum vertical distance'' between ``two defining segments''
(say $\ell_1$ and $\ell_2$) will be the length
of the side of ${\cal S}$. See Figure~\ref{gruent_case_label}(a,b).
If there exists another square ${\cal S'}$ that hits all 
the segment, then ${\cal S'}$ will also
hit $\ell_1$ and $\ell_2$ indicating that the horizontal/vertical
span will increase or remain at least same as that of $\cal S$.
If $\cal S$ and $\cal S'$ are of same size (see 
Figure~\ref{gruent_case_label}(a,b)), then the defining segments of $\cal S$
and $\cal S'$ are same.
\item {\bf \boldmath Two defining segments touch the two diagonal vertices of the square ${\cal S}$}:\\
If ${\cal S}$ is defined by two segments $\ell_1$ and $\ell_2$ touching 
its two diagonal vertices, then the segments are either 
parallel to each other (see Figure~\ref{gruent_case_label}(c)) or 
the minimum distance between two defining segments $\ell_1$ and $\ell_2$ is
the length of diagonal of ${\cal S}$ (See Figure~\ref{gruent_case_label}(d)).  
Here also if there exists another square ${\cal S}'$ 
defined by other two segments $(\ell'_1,\ell'_2)\neq (\ell_1,\ell_2)$
then the horizontal/vertical
span will increase or remain at least same as that of $\cal S$.
If $\cal S$ and $\cal S'$ are of same size (in case $\ell_1$
and $\ell_2$ are parallel as shown in Figure~\ref{gruent_case_label}(c)),
then the defining segments of $\cal S$
and $\cal S'$ are same.
\end{itemize}
(b) If the number of defining segments of ${\cal S}$ are three, 
say $\ell_1$, $\ell_2$ and $\ell_3$, then two of them 
must touch the two opposite boundaries (left, right) or (top, bottom)
of the square ${\cal S}$. If there exists any square $\cal S'$
that hits all the segments in $\cal L$, then
arguing as in the earlier case,
it can be shown that the size of ${\cal S'}$ is at least as large as $\cal S$,
and the defining segments will remain same.

{\bf \boldmath Part (ii)}: Assume that none of the vertices of
 the minimum sized axis parallel square $\cal S$
 lies on $\overline{p_1p_2}$, $\overline{q_1q_2}$, $\overline{r_1r_2}$
and $\overline{s_1s_2}$. 
 It can be shown that, one can translate $\cal S$ ``horizontally towards left or right'',
 and/or ``vertically upward or downward'' keeping its size unchanged, without missing
  any segment (i.e. each segment remains hit by ${\cal S}$ always) to move  
  one of the vertices of $\cal S$ touching
 the respective segment.
\end{proof}

If there are multiple minimum sized congruent squares for hitting the segments
(See Figure~\ref{gruent_case_label}(a,b,b,d)), 
then our proposed algorithm for the {\bf LHIT problem} will  also
work. The reason is that after choosing an ${\cal S}_1$, our algorithm
for computing ${\cal S}_2$ needs only the segments that are not hit by ${\cal S}_1$.
We increase the size of ${\cal S}_1$ monotonically 
according to the event points 
corresponding to the top-right corner of ${\cal S}_1$.
Now in each step, if ${\cal S}_1$ hits a defining segment of ${\cal S}_2$,
then the size of ${\cal S}_2$ is reduced by 
eliminating that segment from it.
If there exists multiple congruent ${\cal S}_2$ of minimum size that hit all the segments which are not 
hit by ${\cal S}_1$, we can choose any one of them as square ${\cal S}_2$,
since all such ${\cal S}_2$'s are defined by the same subset
segments (Observation~\ref{basic}(i)).
\begin{figure}
\centering
\includegraphics[width=0.7\textwidth]{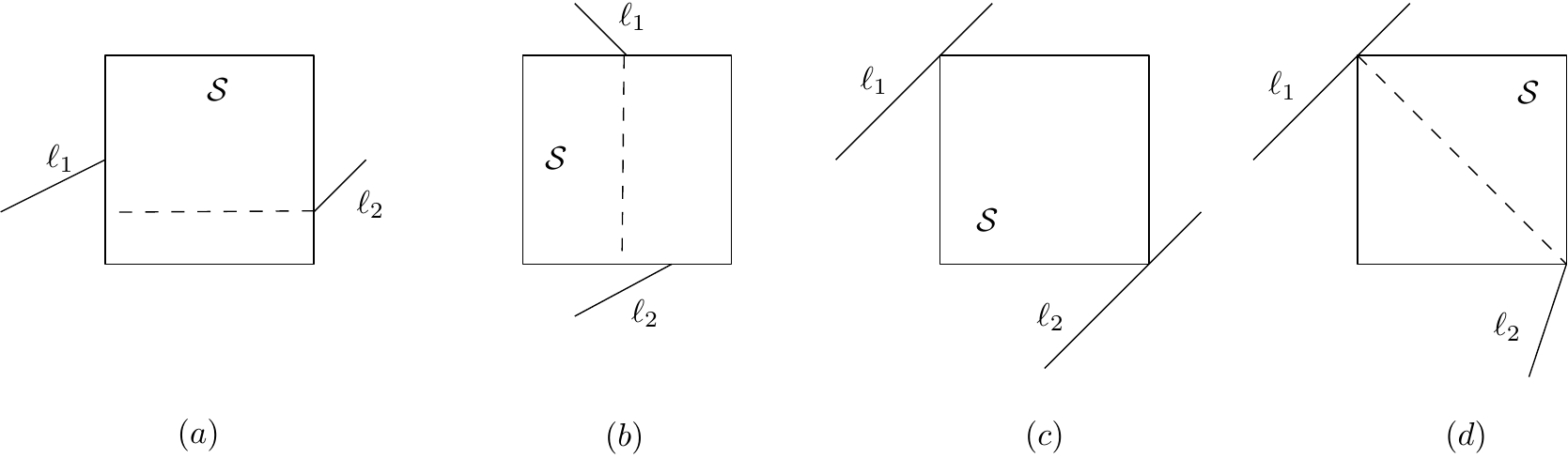}
\caption{Demonstration of multiple copies minimum sized square 
$\cal S$ defined by two segments $\ell_1$ and $\ell_2$:
(a) at the left and right boundary of $\cal S$ (b) at the top and bottom
boundary of $\cal S$ (c) at two diagonal vertices
of $\cal S$ where the segments are parallel, (d) at two diagonal vertices of $\cal S$
where the segments are non-parallel}
\label{gruent_case_label}
\end{figure}

\begin{lemma}
 \label{min_square}
 An axis parallel square of minimum size hitting all the members of
 a given set $\cal L$ of $n$ line segments can be obtained 
 in $O(n)$ time.
\end{lemma}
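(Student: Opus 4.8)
The plan is to show that, after two linear scans, the task collapses to an instance involving only a constant number of segments, which can then be resolved in $O(1)$ time.

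First I would perform a single scan over $\cal L$ to extract the four distinguished points $a,b,c,d$ and build ${\cal R}_{abcd}$; this costs $O(n)$. The crucial preliminary observation is that \emph{every} hitting square must contain ${\cal R}_{abcd}$: since all of $\ell_a$ lies in the halfplane $x \le x_a$ with its extreme point $a$ on $\lambda_a$, any square meeting $\ell_a$ must reach $x \le x_a$, and symmetrically for $b,c,d$; hence the left, top, right and bottom sides of a hitting square cannot be strictly inside the corresponding sides of ${\cal R}_{abcd}$. Consequently the side length of any hitting square is at least $\max\{w,h\}$, where $w$ and $h$ are the width and height of ${\cal R}_{abcd}$.

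Next I would test, in a second $O(n)$ scan, whether ${\cal R}_{abcd}$ is a hitting rectangle. If it is, then a square of side $\max\{w,h\}$ placed so as to contain ${\cal R}_{abcd}$ hits every segment, and by the lower bound just established it is optimal; the whole computation is $O(n)$ and we are done. Otherwise, during the same scan I would identify the (at most four) extremal unhit segments $\ell_p,\ell_q,\ell_r,\ell_s$ and compute the chords $\overline{p_1p_2}, \overline{q_1q_2}, \overline{r_1r_2}, \overline{s_1s_2}$ in $O(1)$ additional time. By Fact~\ref{simple}, the minimum hitting square is determined solely by the $\le 8$ segments $\ell_a,\ell_b,\ell_c,\ell_d,\ell_p,\ell_q,\ell_r,\ell_s$, so from this point on only a constant number of segments is relevant.

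To solve this constant-size instance I would invoke Observation~\ref{basic}(ii): at least one vertex of the optimal square ${\cal S}$ lies on one of the four chords. For each chord and each of the four corner types, I slide that corner along the chord using a single real parameter $t$; the requirements ``${\cal S} \supseteq {\cal R}_{abcd}$'' and ``${\cal S}$ hits each of $\ell_p,\ell_q,\ell_r,\ell_s$'' translate into an interval of feasible $t$, and the side length becomes a piecewise-linear function of $t$ whose minimum over that interval is found in $O(1)$. Taking the smallest candidate over the $O(1)$ configurations yields ${\cal S}$, so the total running time is dominated by the two scans and is $O(n)$.

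The main obstacle is the bookkeeping inside this constant instance: translating ``contain ${\cal R}_{abcd}$ and hit the critical segments'' into the exact feasible range of $t$ and the exact side-length expression, and separating the subcases in which ${\cal S}$ is pinned by two opposite sides, by two diagonal vertices, or by three segments, as catalogued in Observation~\ref{basic}(i). These subcases (including the parallel and non-parallel diagonal contacts) must each be handled so that the reported square is simultaneously feasible and minimal; once they are, correctness follows from Fact~\ref{simple} and Observation~\ref{basic}, and the $O(n)$ bound is immediate.
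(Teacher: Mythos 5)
Your proposal is correct and follows essentially the same route as the paper: reduce to the $\le 8$ special segments via Fact~\ref{simple}, pin one vertex of the optimal square to one of the four chords via Observation~\ref{basic}(ii), and solve each resulting one-parameter problem in $O(1)$. Your piecewise-linear minimization over the slide parameter $t$ is just a reparametrization of the paper's construction of the upper envelope of the loci $loc(i)$ and inspection of its vertices, so the two arguments coincide in substance.
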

\begin{proof}
 Among the given set ${\cal L}$ of $n$ line segments, we can identify the 
 special line segments $\ell_i$, $i\in \{a,b,c,d,p,q,r,s\}$
 (see Figure~\ref{degenerated_fig1}) in $O(n)$ time.
 
 We now show that a minimum sized axis parallel square ${\cal S}^r$ whose ``top-right'' 
 corner lies on $\overline{r_1r_2}\in \ell_r$ and hits all the segments, can be computed 
 in $O(1)$ time. The same method works for computing the minimum sized squares ${\cal S}^p$,
 ${\cal S}^q$ and ${\cal S}^s$ whose one corner lies on $\overline{p_1p_2}$, 
 $\overline{q_1q_2}$ and $\overline{s_1s_2}$ respectively and hits all the line segments. 
 Finally we will choose minimum sized square among ${\cal S}^p$, ${\cal S}^q$, 
 ${\cal S}^r$ and ${\cal S}^s$. 
 
 {\bf Computation of ${\cal S}^r$}: For each $i\in\{a,~p,~q,~d,~s\}$, we compute 
 the locus $loc(i)$ of the ``bottom-left'' corner of a minimum sized square $\cal S$
 which hits the line segment $\ell_i$, while its ``top-right'' corner moving along 
 the segment $\overline{r_2r_1}$. In Figure~\ref{degenerated_fig11}(a), $loc(s)$ is 
 demonstrated, while in Figure~\ref{degenerated_fig11}(b) all the $loc(i)$, $i\in
 \{a,~p,~q,~d,~s\}$ are shown. We also compute the locus of the ``bottom-left''
 corner of $\cal S$ (denoted by $loc(b,c)$ in Figure \ref{degenerated_fig11}(b)) 
 that hits both $\ell_b$ and $\ell_c$ while the top-right corner of $S$ moves along
 the segment $\overline{r_2r_1}$. Each of the locii in $\{loc(i),i = a,p,q,d,s,(b,c)\}$ 
 consists of at most three line segments (see Appendix
 for details).
 We consider two lines $DL_1$ and $DL_2$ of unit slope passing through $r_1$ and $r_2$ 
 respectively (see Figure~\ref{degenerated_fig11}(b)). We can compute the upper envelope $\large U$
 (as the distance is measured from $\overline{r_2r_1}$) of the 
 locii $\{loc(i)$, $i\in\{a,p,q,d,s,(b,c)\}\}$ within the strip bounded by $DL_1$ and 
 $DL_2$ (colored red in Figure~\ref{degenerated_fig11}(b)) in $O(1)$ time. The square 
 whose ``bottom-left'' corner lies on the upper envelope $\large U$ while its ``top-right'' corner lies on $\overline{r_2r_1}$, hits all the segments $\ell_i$, 
 $i\in\{a,b,c,d,p,q,r,s\}$. Thus, the upper envelope $\large U$
 corresponds to the locus of the 
 bottom-left corner of ${\cal S}^r$ that hits all the segment in $\cal L$ (see 
 Fact~\ref{simple}) while its top-right corner moves along $\overline{r_2r_1}$. 
 Note that $\large U$ consists of a constant number of segments and it can be computed 
 in $O(1)$ time. As one moves along an edge of $\large U$, the 
 size of the square ${\cal S}^r$ either monotonically increases or decreases or 
 remains same. So, the minimum size of the square ${\cal S}^r$ occurs at some vertex 
 of $\large U$, and it can be determined by inspecting all the vertices of $\large U$. 
 
 If any one of $\ell_p$, $\ell_q$, $\ell_r$ and $\ell_s$ does not exist in the given 
 instance with the segments $\cal L$, then the corresponding locus is not present, 
 and the  same method works in such a situation with the available set of locii. 
 \end{proof}

\section{Two hitting squares} 
We now discuss the hitting problem by two axis parallel squares (${\cal S}_1$, 
${\cal S}_2$) using the method described in Section \ref{one-hit} as a subroutine. We assume 
that ${\cal S}_1$ hits $\ell_p$ along with some other members in $\cal L$. 
${\cal S}_2$ must hit the members that are not hit by ${\cal S}_1$. 
Our objective is to compute the pair (${\cal S}_1$, ${\cal S}_2$) that 
minimizes $\max(size({\cal S}_1),size({\cal S}_2))$. 
 
\begin{lemma}
\label{cornerx}
To minimize the $\max(size({\cal S}_1),size({\cal S}_2))$,  
the ``bottom-left'' corner of ${\cal S}_1$ will lie on $\ell_p$.
\end{lemma}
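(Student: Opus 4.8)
The plan is to establish that placing the bottom-left corner of ${\cal S}_1$ on $\ell_p$ is forced at optimality, by a standard exchange/translation argument analogous to Observation~\ref{basic}(ii). The key structural fact I would lean on is that $\ell_p$ is, by definition, the segment not hit by ${\cal R}_{abcd}$ that lies farthest in the vertically-downward and horizontally-leftward directions from the corner region near $a$ and $d$. Thus among the ``hard'' segments that force ${\cal S}_1$ to grow in the down-left direction, $\ell_p$ is extremal, so if ${\cal S}_1$ is to contribute to covering the left/bottom difficulty while ${\cal S}_2$ handles the remainder, it is natural that the determining contact of ${\cal S}_1$ occurs precisely at $\ell_p$.

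**First I would** set up the optimization explicitly: we seek the pair $({\cal S}_1, {\cal S}_2)$ minimizing $\max(size({\cal S}_1), size({\cal S}_2))$ under the standing assumption (stated just before the lemma) that ${\cal S}_1$ hits $\ell_p$. Fix any optimal pair and suppose, for contradiction, that the bottom-left corner of ${\cal S}_1$ does \emph{not} lie on $\ell_p$. Since ${\cal S}_1$ hits $\ell_p$ but its bottom-left corner is strictly interior to (or off) $\ell_p$, there is slack: I would argue that one can translate ${\cal S}_1$ toward the lower-left (downward and/or leftward) while keeping it the same size, without losing any segment it currently hits, until its bottom-left corner comes to rest on $\ell_p$. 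This is exactly the translation mechanism used in the proof of Observation~\ref{basic}(ii): as long as no vertex is pinned on a defining segment, a size-preserving translation in a free direction is available.

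**The crucial point** — and the step I expect to be the main obstacle — is showing that such a down-left translation of ${\cal S}_1$ does not \emph{increase} the required size of ${\cal S}_2$, so that the objective $\max(size({\cal S}_1), size({\cal S}_2))$ does not rise. Moving ${\cal S}_1$ down and to the left may cause it to drop some segments it previously hit, pushing them into ${\cal S}_2$'s responsibility; I must verify that the segments possibly lost lie among those already covered by ${\cal S}_2$, or that they are dominated by $\ell_p$ in the relevant direction so that a ${\cal S}_2$ hitting the farther segment $\ell_p$'s companions still suffices. Here the extremality of $\ell_p$ is essential: because $\ell_p$ is the farthest-down-and-left unhit segment, translating ${\cal S}_1$ toward $\ell_p$ can only shift ${\cal S}_1$'s hitting set ``outward'' in a controlled way, and any segment it releases is at least as easy for ${\cal S}_2$ to absorb as the ones ${\cal S}_2$ already handles. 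I would formalize this by comparing horizontal and vertical spans, as in Observation~\ref{basic}.

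**Finally I would** conclude that the translated configuration is feasible with $\max(size({\cal S}_1), size({\cal S}_2))$ no larger than before, so it is also optimal, and in it the bottom-left corner of ${\cal S}_1$ lies on $\ell_p$. This contradicts the assumption that no optimal configuration has this property, unless the original already satisfied it; either way an optimal solution with the bottom-left corner of ${\cal S}_1$ on $\ell_p$ exists, which is what the lemma asserts. The bulk of the rigor sits in the span-comparison bookkeeping of the obstacle step; the translation existence and the final contradiction are routine once that monotonicity of ${\cal S}_2$'s size is pinned down.
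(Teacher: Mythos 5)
There is a genuine gap, and it sits exactly where you flagged it: the claim that a size-preserving \emph{down-and-left} translation of ${\cal S}_1$ does not increase the required size of ${\cal S}_2$. You offer only the heuristic that any segment released by ${\cal S}_1$ is ``at least as easy for ${\cal S}_2$ to absorb,'' but a down-left translation sheds segments near the \emph{top} and \emph{right} boundaries of ${\cal S}_1$, and nothing about the extremality of $\ell_p$ (which concerns the down-left direction) controls where those shed segments lie relative to ${\cal S}_2$; in general ${\cal S}_2$ may have to grow to pick them up, and the objective can increase. The promised ``span-comparison bookkeeping'' is not a routine patch --- it is the whole difficulty, and it does not go through as stated.

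The paper's proof avoids this entirely by translating in the \emph{opposite} direction and showing that \emph{no} segment is lost. The configuration to repair is the one where the bottom-left corner of ${\cal S}_1$ lies below $\ell_p$, so that $\ell_p$ properly crosses both the bottom and the left boundaries; the corner is then brought onto $\ell_p$ by moving ${\cal S}_1$ upward and/or rightward. To certify that the hit set ${\cal L}_1$ is preserved, the paper takes the two extremal witnesses $\ell_1$ (the member of ${\cal L}_1$ whose top end-point has minimum $y$-coordinate) and $\ell_2$ (minimum $x$-coordinate of the right end-point): one translates up until either the corner reaches $\ell_p$ or the bottom boundary reaches $\ell_1$, and in the latter case switches to a rightward translation, which must reach $\ell_p$ because $\ell_p$ crosses the bottom boundary. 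Since ${\cal L}_1$ is unchanged, ${\cal S}_2$ is literally unaffected and the objective cannot increase. If you want to salvage your argument, replace the down-left move by this up-right move and supply the $\ell_1,\ell_2$ stopping argument; as written, the central monotonicity claim for ${\cal S}_2$ is both unproven and, in the relevant case, aimed in the wrong direction (a down-left translation moves the corner \emph{away} from $\ell_p$ when both boundaries already intersect it).
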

\begin{proof}
Suppose ${\cal L}_1 \subset {\cal L}$ be the set of segments hit by ${\cal S}_1$
when $\max(size({\cal S}_1),size({\cal S}_2))$ is minimized.
Let the ``bottom-left'' corner of ${\cal S}_1$ lie below $\ell_p$ i.e. both bottom boundary
and left boundary of ${\cal S}_1$ properly intersect $\ell_p$ (see Figure~\ref{degenerated_fig12}). 
Let $\ell_1, \ell_2\in {\cal L}_1$ be two segments so that the $y$-coordinate
(resp. $x$-coordinate) of top end-point (resp. right end-point) of $\ell_1$
(resp. $\ell_2$) is minimum among that of all the segment $\ell_k\in {\cal L}_1$.
If the bottom (resp. left) boundary of ${\cal S}_1$ properly intersect $\ell_1$ (resp. $\ell_2$), 
we can translate ${\cal S}_1$ vertically upwards (resp. horizontally rightwards) keeping its size same,
so that the bottom boundary (resp. left boundary) of  ${\cal S}_1$ touches
$\ell_1$ (resp. $\ell_2$) or the bottom-left corner of ${\cal S}_1$
touches $\ell_p$.
If $\ell_p$ is touched, the result is justified. If $\ell_1$ (resp. 
$\ell_2$) is touched, we can translate ${\cal S}_1$ towards right (resp. above) to make the 
bottom-left corner of ${\cal S}_1$ touching $\ell_p$. 
The revised ${\cal S}_1$ also hits all the segments in ${\cal L}_1$.
\end{proof}

\begin{figure}
\centering
\includegraphics[width=0.7\textwidth]{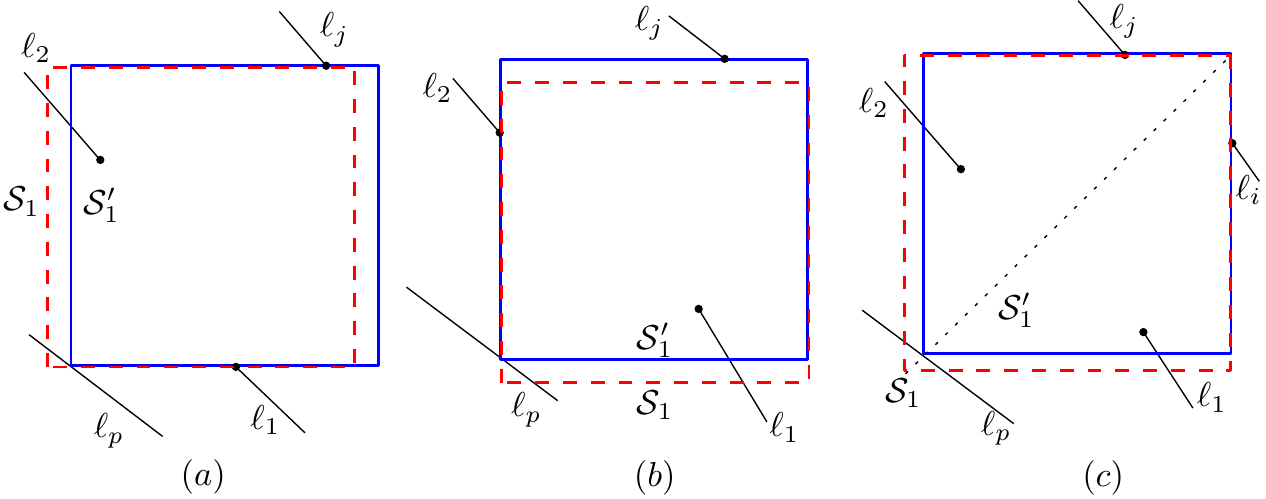}
\caption{Proof of Lemma \ref{cornerx}}
\label{degenerated_fig12}
\end{figure}

Lemma~\ref{cornerx} says that a square $\cal S$ serves as ${\cal S}_1$ if the boundary of $\cal S$ touches $\ell_p$ and also hits
a subset ${\cal L}' \subset {\cal L}\setminus \{\ell_p\}$ with at least one
segment of ${\cal L}'$ touching the boundary
of $\cal S$ from outside.
 The reason of defining ${\cal S}_1$ in such a manner is that if all the segments ${\cal L}'$ hit by ${\cal S}_1$ lie 
either inside ${\cal S}_1$ or properly intersect the boundary of ${\cal S}_1$, then we can reduce the size of ${\cal S}_1$ hitting 
the same set of segments. Now, we will introduce the concept of defining ${\cal S}_1$ using a subset of $\cal L$ as follows:

\begin{definition}
A subset ${\cal L}' \subseteq {\cal L}\setminus \{\ell_p\}$ is said to be {\em minimal}
to define a square $\cal S$ 
(with bottom-left corner is on $\ell_p$) as ${\cal S}_1$
if the members of ${\cal L}'$ uniquely determine its top-right corner of $\cal S$, and no proper subset of ${\cal L}'$ can define 
the top-right corner of $\cal S$ uniquely.
\end{definition}

We will consider possible subsets ${\cal L}_1 \subset \cal L$ that
can define ${\cal S}_1$, and invoke the procedure 
described in Section \ref{min_square} with the subset ${\cal L}\setminus 
\left({\cal L}_1\cup\{\ell_p\}\right)$ to compute ${\cal S}_2$. The following
Lemma \ref{lx} and Lemma \ref{lxx} says that we need to consider the two cases separately 
depending on whether the bottom-left 
corner of ${\cal S}_1$, denoted by $\pi$, resides at (i) an end-point of $\ell_p$, 
and (ii) an intermediate point of $\ell_p$.

\begin{lemma} \label{lx}
If $\pi$ coincides with an end-point of $\ell_p$ (Case (i)), then ${\cal S}_1$ is determined by a 
single segment of ${\cal L}\setminus \{\ell_p\}$. 
\end{lemma}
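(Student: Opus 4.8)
The plan is to exploit the fact that fixing $\pi$ at an endpoint of $\ell_p$ removes all but one degree of freedom from ${\cal S}_1$. First I would note that, with its bottom-left corner pinned at the fixed point $\pi=(\pi_x,\pi_y)$, the square ${\cal S}_1$ is completely described by its side length $t\ge 0$, namely ${\cal S}_1(t)=[\pi_x,\pi_x+t]\times[\pi_y,\pi_y+t]$; in particular its top-right corner $(\pi_x+t,\pi_y+t)$ traces the ray of unit slope emanating from $\pi$. Thus ``determining the top-right corner uniquely'' is the same as fixing the single scalar $t$, and the whole argument reduces to pinning down this one parameter.

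Second, I would establish monotonicity: if $t\le t'$ then ${\cal S}_1(t)\subseteq{\cal S}_1(t')$, so the family is nested and the set of segments hit by ${\cal S}_1(t)$ is nondecreasing in $t$ --- once a segment is hit it remains hit. For each segment $\ell\in{\cal L}\setminus\{\ell_p\}$ that ${\cal S}_1$ is required to hit, let $t_\ell$ be the infimum of sizes $t$ for which ${\cal S}_1(t)$ meets $\ell$; by continuity and the fact that such an $\ell$ lies in the closed region reachable by growing the square from $\pi$, this infimum is attained, so $t_\ell$ is a well-defined value at which $\ell$ first touches the boundary of ${\cal S}_1(t_\ell)$ from outside.

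Third, by the nesting property, ${\cal S}_1(t)$ hits every member of the target set ${\cal L}'$ precisely when $t\ge t^\ast:=\max_{\ell\in{\cal L}'}t_\ell$. The minimum-size choice is therefore $t=t^\ast$, which is realized by (at least) one segment $\ell^\ast\in{\cal L}'$ attaining the maximum. That single segment $\ell^\ast$ alone forces the side length to equal $t^\ast$, hence pins the top-right corner of ${\cal S}_1$ to the unique point where the unit-slope ray through $\pi$ meets the constraint imposed by $\ell^\ast$; moreover $\ell^\ast$ touches the boundary of ${\cal S}_1$ externally, exactly as required by the discussion following Lemma~\ref{cornerx}. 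Consequently the minimal defining subset is the singleton $\{\ell^\ast\}$, which is precisely the assertion that ${\cal S}_1$ is determined by a single segment of ${\cal L}\setminus\{\ell_p\}$.

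The step I expect to be the main obstacle is the last one: verifying that the maximizing segment touches the boundary from outside (so that ${\cal S}_1$ cannot be shrunk further and $\ell^\ast$ is a genuine defining segment), and confirming that a single external-touch constraint pins the top-right corner uniquely rather than leaving a residual degree of freedom --- this is where the fixed bottom-left corner is essential, since it collapses the two coordinates of the corner onto one ray. Handling a tie, where two segments attain $t^\ast$ simultaneously, is harmless: either one already fixes $t^\ast$, so minimality still yields a singleton. I would also check the mild degenerate possibility that $\ell^\ast$ is reached at the top-right corner itself rather than along the top or right edge, which does not affect the uniqueness of $t^\ast$ nor the conclusion.
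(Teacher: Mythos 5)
Your proposal is correct and follows essentially the same route as the paper: both arguments rest on the observation that pinning the bottom-left corner at $\pi$ leaves only one degree of freedom (the top-right corner moves on the unit-slope ray through $\pi$), so minimality forces exactly one segment to touch the boundary externally, and that segment determines ${\cal S}_1$. Your $t^\ast=\max_{\ell}t_\ell$ formulation is just a more explicit rendering of the paper's ``otherwise we could shrink ${\cal S}_1$'' step, and the three touching configurations you flag at the end (corner, right edge via a left end-point, top edge via a bottom end-point) are exactly the paper's three exhaustive cases.
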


\begin{proof}
Here, the top-right corner $\pi'$ of ${\cal S}_1$ lies on a line of unit slope 
 passing through $\pi$. We need to investigate the following three
 exhaustive cases. 
\begin{itemize}
\item $\pi'$ lies on a segment $\ell_i \in {\cal L}\setminus \{\ell_p\}$ 
(see Figure~\ref{degenerated_fig3}(b)), or 
\item  $\pi'$ lies on the vertical line passing through the left end-point of 
a segment $\ell_i \in {\cal L}\setminus \{\ell_p\}$ 
(see Figure~\ref{degenerated_fig3}(c, d)), or  
\item  $\pi'$ lies on the horizontal line passing through the bottom end-point of a segment
$\ell_i \in {\cal L}\setminus \{\ell_p\}$ 
(see Figure~\ref{degenerated_fig3}(a, e)).  
\end{itemize}
This is due to the fact that if none of these cases happen then we can get 
another square, say ${\cal S}_1'$, of reduced size whose bottom-left corner is 
at $\pi$  and it hits all the segments in $\cal L$ that are also hit by 
${\cal S}_1$. Here ${\cal S}_1'$ serves the purpose of ${\cal S}_1$.
Thus,the lemma follows. 
\end{proof}

\begin{lemma}\label{lxx}
If $\pi$ coincides with an intermediate point of $\ell_p$ (Case (ii)),
then ${\cal S}_1$ is determined by two  
segment of ${\cal L}\setminus \{\ell_p\}$. 
\end{lemma}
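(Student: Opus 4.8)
The plan is to carry out the same kind of reduction as in the proof of Lemma~\ref{lx}, but now with one extra degree of freedom to absorb. Since $\pi$ is an \emph{interior} point of $\ell_p$, it may slide along $\ell_p$ in either direction while staying on the segment, so the admissible squares with bottom-left corner on $\ell_p$ form a two-parameter family: I would parametrize it by the position $t$ of $\pi$ along $\ell_p$ and the side length $z$, giving bottom-left corner $\pi(t)=(x(t),y(t))$ (affine in $t$) and top-right corner $\pi'(t,z)=(x(t)+z,\,y(t)+z)$. By the definition of a minimal defining set, proving the lemma amounts to showing that fixing $\pi'$ --- equivalently, removing both free parameters $(t,z)$ --- requires exactly two external contacts, each supplied by a distinct segment of ${\cal L}\setminus\{\ell_p\}$.

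For the heart of the argument I would introduce, for every segment $\ell_i$ that ${\cal S}_1$ must hit, the function $g_i(t)$ equal to the smallest side length for which the square with bottom-left corner at $\pi(t)$ hits $\ell_i$. Using the three-way contact classification of Lemma~\ref{lx} (the top-right corner meets the body of $\ell_i$, or a side of the square meets the left/bottom end-point of $\ell_i$), each $g_i$ is \emph{piecewise affine} in $t$. The minimal ${\cal S}_1$ anchored at position $t$ has side length $\max_i g_i(t)$, and the square we seek minimizes $\max_i g_i(t)$ over $t\in\ell_p$. This upper envelope is piecewise affine; in Case~(ii) its minimizer is interior to $\ell_p$, so it cannot fall at an end-point of the $t$-range and must be attained at a breakpoint where two affine pieces --- with slopes of opposite sign --- meet. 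Generically these two pieces come from two distinct segments $\ell_1,\ell_2$, which are then simultaneously in external contact and whose two contact equations solve uniquely for $(t,z)$, pinning $\pi'$. A single external contact cannot produce an interior minimum, since on each affine piece $g_i$ is monotone and attains its minimum at an end-point of $\ell_p$ (that is Case~(i) of Lemma~\ref{lx}); hence the minimal defining set has exactly two elements.

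The step I expect to be the main obstacle is controlling the breakpoints of the envelope in the non-generic configurations and confirming that the count is still ``two segments''. Two situations need care. First, a breakpoint of $\max_i g_i$ can be an internal breakpoint of a single $g_i$, where the top-right corner passes through an end-point of $\ell_i$; there one segment furnishes two contact conditions at once, and I must argue this is subsumed by the two-contact enumeration (the end-point acting as the intersection of a ``right-side'' and a ``top-side'' contact) so that the stated bound is respected. Second, when $\ell_1$ and $\ell_2$ give parallel or otherwise dependent contact equations, I must verify that $(t,z)$ is nonetheless determined so that no third segment is forced. Both are handled by replaying the exhaustive contact case-split of Lemma~\ref{lx} within each piece of the envelope, which is where the real bookkeeping lies.
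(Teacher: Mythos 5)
Your envelope argument is a genuinely different route from the paper's proof, which simply enumerates the three ways an interior bottom-left corner can be pinned: by the horizontal projection of a bottom-boundary contact, by the vertical projection of a left-boundary contact, or by the diagonal projection of a top-right corner that is itself fixed by two top/right contacts, invoking Lemma~\ref{lx} to supply the second segment in the first two cases. The count ``two segments'' does come out of your argument, but there is a concrete gap in how you set it up. The function $g_i(t)$ is finite only on the sub-interval of $\ell_p$ from which $\ell_i$ is reachable at all by a square growing towards the upper right; the boundary of that sub-interval is exactly the position where the bottom side of the square passes through the top end-point of $\ell_i$, or the left side passes through the right end-point of $\ell_i$. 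The minimizer of $\max_i g_i$ can therefore sit on the boundary of the intersection of these feasibility intervals --- an interior point of $\ell_p$ that is \emph{not} a crossing of two finite affine pieces of opposite slope, and \emph{not} covered by your three-way contact classification, which you inherit from Lemma~\ref{lx} and which only describes contacts on the top side, the right side, or the top-right corner of the square. Your assertion that ``a single external contact cannot produce an interior minimum, since on each affine piece $g_i$ is monotone and attains its minimum at an end-point of $\ell_p$'' silently assumes every $g_i$ is finite on all of $\ell_p$, which is false in precisely these configurations.

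These domain-boundary optima are the paper's cases B1 and B2 (Figure~\ref{degenerated_fig4}(a,b,d,e)) and they do occur: take one segment whose top end-point is low, forcing $\pi_y$ to stay at or below that height, and another segment far above, rewarding a larger $\pi_y$; the optimum then has the bottom boundary of ${\cal S}_1$ passing through the top end-point of the first segment while the second segment fixes the size. One segment pins the position through the feasibility constraint and a second pins the size, so the lemma's count of two survives --- but to close the proof you must add bottom-boundary and left-boundary contacts to your contact classification (equivalently, treat the feasibility boundary of each $g_i$ as a piece of infinite slope in the envelope). This matters beyond the bare statement, because the enumeration of contact types is exactly what the algorithm's subsequent case split B1/B2/B3 is built on; as written, your analysis would generate only the B3-type candidates.
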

\begin{proof}
In this case, the bottom-left corner of ${\cal S}_1$ will be determined as follows: 
\begin{itemize}
\item[$\bullet$] a segment $\ell_i \in {\cal L}\setminus \{\ell_p\}$ defines 
the bottom boundary of ${\cal S}_1$ whose horizontal projection $\pi$ on 
$\ell_p$ determines the bottom-left corner of ${\cal S}_1$
(see Figure~\ref{degenerated_fig4}(d,~e)), or 
\item[$\bullet$] a segment $\ell_i \in {\cal L}\setminus \{\ell_p\}$ defines 
the left boundary of ${\cal S}_1$ whose vertical projection $\pi$ on 
$\ell_p$ determines the bottom-left corner of ${\cal S}_1$
(see Figure~\ref{degenerated_fig4}(a,~b)), or 
\item[$\bullet$] a pair of segments $\ell_i$ and $\ell_i'$ defines 
the top-right corner $\pi'$ of ${\cal S}_1$, and 
the point of intersection of a line of unit slope passing through $\pi'$  
with the line segment $\ell_p$
determines the bottom-left corner of ${\cal S}_1$ (see Figure~\ref{degenerated_fig4}(c)).
\end{itemize}
In the first and second bulleted case, Lemma \ref{lx} says that one more segment $\ell_j$ is required
to define the top-right corner of ${\cal S}_1$. In the third bulleted 
case, both the bottom-left and the top-right corners of ${\cal S}_1$
are already defined. Thus, the lemma follows.
\end{proof}

In the following two subsections we will compute ${\cal S}_1$ considering the two cases where 
(i) ${\cal S}_1$ is defined by one segment in ${\cal L}\setminus \{\ell_p\}$ and 
(ii) two segments in ${\cal L}\setminus \{\ell_p\}$ respectively. Note that, if a single segment 
$\ell \in \cal L$ touches a corner of ${\cal S}_1$, then $\ell$ is said to touch both 
the boundaries of ${\cal S}_1$ adjacent to that corner (see Figure~\ref{degenerated_fig4}(f)).

\begin{figure}[h]
\centering
\includegraphics[width=0.7\textwidth]{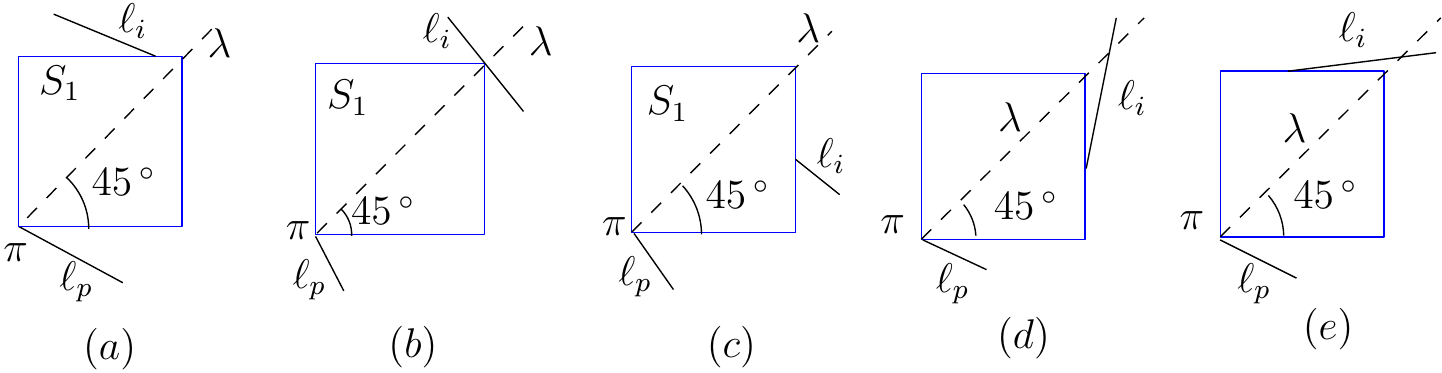}
\caption{The ``bottom-right'' corner of square ${\cal S}_1$ is at a 
segment end-point}
\label{degenerated_fig3}
\end{figure} 

\subsubsection*{(A) ${\cal S}_1$  is defined by one line segment:}

We draw a straight line $\lambda$ of slope ``1'' through an end-point $\pi$ of 
$\ell_p$. Next, we consider each segment 
 $\ell_i \in {\cal L}\setminus \{\ell_p\}$, and create an array $Q$ of event points as follows:  
\begin{itemize}
\item If $\ell_i$ is strictly above $\lambda$ (Figure~\ref{degenerated_fig3}(a)),
store the horizontal 
projection $q$ of the bottom end-point of $\ell_i$ on the line $\lambda$ in $Q$. 
\item If $\ell_i$ with negative slope intersects $\lambda$ at a point $q$ 
(Figure~\ref{degenerated_fig3}(b)), we store $q$ 
 in $Q$.
\item If $\ell_i$ with positive slope ($\leq 1$) intersects $\lambda$  
(Figure~\ref{degenerated_fig3}(e)), store the horizontal 
projection $q$ of the bottom end-point of $\ell_i$ on the line $\lambda$ in $Q$. 
\item If $\ell_i$ with positive slope ($> 1$) intersects $\lambda$ 
(Figure~\ref{degenerated_fig3}(d)), store the vertical 
projection $q$ of the left end-point of $\ell_i$ on the line $\lambda$ in $Q$. 
\item If  $\ell_i$ is strictly below $\lambda$ 
(Figure~\ref{degenerated_fig3}(c)), then store the vertical 
projection $q$ of the left end-point of $\ell_i$ on $\lambda$ in $Q$. 
\end{itemize}

We consider each member $q \in Q$. Define ${\cal S}_1$ with its 
(bottom-left, top-right) corner points as $(\pi,q)$. Identify the 
subset ${\cal L}_1$ of segments in 
${\cal L}$ that are hit by ${\cal S}_1$. Call the 
procedure of Section \ref{min_square} with the set of segments 
${\cal L}\setminus{\cal L}_1$ to compute ${\cal S}_2$. Replace the current optimum square-pair 
by $\max(size({\cal S}_1),size({\cal S}_2))$ if needed.

\begin{lemma} 
\label{one}
The minimum of the size of the optimum pair of squares where  
${\cal S}_1$  is defined by one line segment of ${\cal L} \setminus \{\ell_p\}$
can be computed in $O(n^2)$ time.
\end{lemma}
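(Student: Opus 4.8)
The plan is to read the lemma as a complexity statement resting on the structural facts already established, and to split the argument into a bound on the number of candidate squares ${\cal S}_1$ together with a cost-per-candidate count. Fix one endpoint $\pi$ of $\ell_p$ as the bottom-left corner of ${\cal S}_1$ and draw the slope-$1$ line $\lambda$ through $\pi$. By Lemma~\ref{lx}, whenever ${\cal S}_1$ is defined by a single segment its top-right corner $\pi'$ lies on $\lambda$ and is pinned by exactly one of the five configurations listed just before the statement (a segment crossing $\lambda$ with negative slope, with positive slope $\le 1$, or with positive slope $>1$, or a segment lying strictly above or strictly below $\lambda$). Each $\ell_i \in {\cal L}\setminus\{\ell_p\}$ therefore contributes a single event point $q$ to the array $Q$, computable in $O(1)$ time; hence $Q$ is built in $O(n)$ time with $|Q|=O(n)$. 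Running the same construction for the other endpoint of $\ell_p$ at most doubles $|Q|$, so the number of candidate positions for $\pi'$ stays $O(n)$.

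The running-time count is then direct. For each $q \in Q$ I would form ${\cal S}_1$ with corners $(\pi,q)$ in $O(1)$ time, determine the hit set ${\cal L}_1$ by one linear scan testing every segment against ${\cal S}_1$ in $O(n)$ time, and invoke the procedure of Lemma~\ref{min_square} on ${\cal L}\setminus{\cal L}_1$ to obtain an optimal ${\cal S}_2$ in $O(n)$ time, updating the current best pair in $O(1)$. Each candidate thus costs $O(n)$, and the $O(n)$ candidates yield $O(n^2)$ overall. For correctness, note that for a fixed ${\cal S}_1$ the best partner minimizes $size({\cal S}_2)$ over all squares hitting ${\cal L}\setminus{\cal L}_1$, which is exactly the single-square problem solved by Lemma~\ref{min_square}; what remains is to show that ranging $q$ over $Q$ really reaches the optimal single-segment ${\cal S}_1$, which is the crux addressed next.

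The step I expect to be the main obstacle is certifying that evaluating only the discrete set $Q$ loses no optimum, rather than sweeping $\pi'$ continuously along $\lambda$. The key observation is that as $\pi'$ moves away from $\pi$ the square ${\cal S}_1$ only grows, so its hit set ${\cal L}_1$ is monotone increasing and changes only at the event points of $Q$; consequently $size({\cal S}_2)$ is a piecewise-constant, non-increasing step function with breakpoints exactly at $Q$, while $size({\cal S}_1)$ increases continuously with the distance from $\pi$ to $\pi'$. On each inter-event interval $size({\cal S}_2)$ is constant, so $\max(size({\cal S}_1),size({\cal S}_2))$ attains its minimum at the left endpoint of that interval, which is an event point of $Q$. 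Tying this monotonicity to the five-way case analysis inherited from Lemma~\ref{lx} is what guarantees that the optimal single-segment ${\cal S}_1$ coincides with one of the sampled candidates, and hence that the $O(n^2)$-time enumeration is exact.
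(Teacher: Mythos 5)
Your proof is correct and follows essentially the same route as the paper: build the $O(n)$-size event array $Q$ on the slope-$1$ line through an endpoint of $\ell_p$, and for each event spend $O(n)$ time to identify the hit set and another $O(n)$ time (via Lemma~\ref{min_square}) to compute ${\cal S}_2$, giving $O(n^2)$ overall. Your closing monotonicity argument --- that the hit set of the growing ${\cal S}_1$ changes only at the events of $Q$, so $size({\cal S}_2)$ is a non-increasing step function while $size({\cal S}_1)$ grows continuously, hence the minimum of the max is attained at an event point --- is a justification of the discretization that the paper leaves implicit, but it does not change the approach.
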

\begin{proof}
The array $Q$ can be computed in $O(n)$ time. For each member 
$q \in Q$, (i) the subset ${\cal L}_1$ of $\cal L$ can be identified in $O(n)$ 
time, and then (ii) the time required for computing ${\cal S}_2$ is also $O(n)$.
As $|Q|=O(n)$, the result follows.  
\end{proof}

\subsubsection*{(B) The top-right corner of ${\cal S}_1$  is defined by two line segments :}

\remove{
\begin{observation}
\label{newobs}
A square ${\cal S}_1$ whose ``left boundary'' is defined by the ``right end-point'' of a segment $\ell_i$
and ``bottom boundary'' is defined by the ``top end-point'' of a segment $\ell_j$ $j\neq i$,
cannot hit the  segment ${\ell_p}$ (see Figure~\ref{degenerated_fig1}).
It follows from the fact that if ${\cal S}_1$ is defined by
$\ell_i$ (resp. $\ell_j$), then it contains the ``top end-point'' of $\ell_j$ 
(resp. ``right end-point'' of $\ell_i$) in its interior.
\end{observation}
}

\remove{
Thus, the end-points of the two line segments $\ell_i$ and $\ell_j$
that define the ``top-right'' corner of
${\cal S}_1$
will lie on any of the following pair of
boundaries of ${\cal S}_1$: (left,~top),
(left,~right), (top,~right), (right,~bottom)
and (top,~bottom) as shown in Figure~\ref{degenerated_fig4}.
}

\remove{
\begin{figure}
\centering
\includegraphics[width=0.9\textwidth]{degeneracy12}
\caption{Proof of Lemma~\ref{corner}.}
\label{degenerated_fig12}
\end{figure} 

\begin{lemma}
\label{corner}
The ``bottom-left'' corner of ${\cal S}_1$ will lie on $\ell_p$ to 
 minimize the $\max(size({\cal S}_1),size({\cal S}_2))$.
\end{lemma}
\begin{proof}
 Let the ``bottom-left'' corner of ${\cal S}_1$ lies below $\ell_p$ i.e. both bottom boundary
 and left boundary of it properly intersects $\ell_p$
 (see Figure~\ref{degenerated_fig12}).
 
 If the ``bottom boundary'' of ${\cal S}_1$ is defined by the segment
  $\ell_i$ (see dotted square in Figure~\ref{degenerated_fig12}(a)),
  then we can translate ${\cal S}_1$ horizontally towards
 right to ${\cal S}'_1$ (keeping its size unaltered) 
 until its ``bottom-left'' corner lies on $\ell_p$ (see solid square
 in Figure~\ref{degenerated_fig12}(a)). If ${\cal L}_1$, ${\cal L}'_1 \subseteq {\cal L}$
 be the set of segments hit by ${\cal S}_1$ and ${\cal S}'_1$ respectively, then
 ${\cal L}_1 \subseteq {\cal L}'_1$. Thus for the 
 remaining segments ${\cal L}_2 ({\cal L}\setminus {\cal L}_1)$
 $\supseteq$ ${\cal L}'_2 ({\cal L}\setminus {\cal L}'_1)$,
 and the result follows in this case.
 
 Similarly, if the ``left boundary'' of  ${\cal S}_1$ is defined by a segment
  $\ell_i$ (see dotted square in Figure~\ref{degenerated_fig12}(b)), then 
 we can translate it vertically upwards to ${\cal S}'_1$
 (keeping its size unchanged) until its ``bottom-left'' corner lies on $\ell_p$ 
 (see solid square in Figure~\ref{degenerated_fig12}(b)). Here also
 $max(size({\cal S}_1),size({\cal S}_2))$ is minimized, if ${\cal S}_1$ 
 is at ${\cal S}'_1$ as argued in the previous case.
 
 If the ``top boundary'' and ``bottom boundary'' of
 ${\cal S}_1$ is defined by two segments
$\ell_j$ and  $\ell_i$ respectively (see red dotted square in Figure~\ref{degenerated_fig12}(c)), then 
 we can reduce ${\cal S}_1$ to ${\cal S}'_1$ by moving its
 ``bottom-left'' corner along its diagonal
 until it touches $\ell_p$ 
 (see solid square in Figure~\ref{degenerated_fig12}(c)). 
 Here all the segments hit by
 ${\cal S}_1$ initially will also be hit by ${\cal S}'_1$,
 keeping ${\cal S}_2$ same.
 Thus $max(size({\cal S}_1),size({\cal S}_2))$ may be minimized.
\end{proof}
}

By Lemma \ref{lxx}, assuming that the bottom-left corner of ${\cal S}_1$ lies in the interior of $\ell_p$, 
we need to consider the following  cases to uniquely define the 
possible bottom-left corner of ${\cal S}_1$.

\begin{itemize}
\itemsep-0.5em 
 \item[B1:] The bottom-left corner of ${\cal S}_1$ is defined by 
 the top end-point of a segment $\ell_i$ touching its bottom boundary
 (see Figure~\ref{degenerated_fig4}(d,~e)).
 \item[B2:] The bottom-left corner of ${\cal S}_1$ is defined by the
 right end-point of a segment $\ell_i$ touching its left boundary
 (see Figure~\ref{degenerated_fig4}(a,~b)).
 \item[B3:] The bottom-left corner of ${\cal S}_1$ is defined by its top-right corner $\pi'$,  
 defined by a pair of segments $\ell_i$
 and $\ell_j$ touching the ``top'' and ``right'' boundaries of ${\cal S}_1$
 (see Figure~\ref{degenerated_fig4}(c)).
\end{itemize}

\begin{figure}[b]
\centering
\includegraphics[width=.98\textwidth]{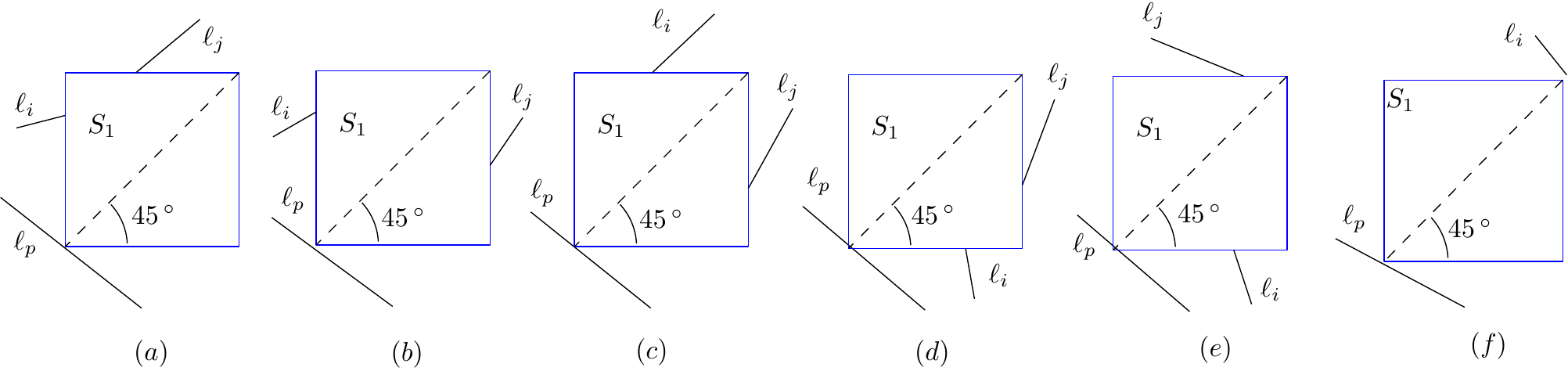}
\caption{The ``top-right'' corner of ${\cal S}_1$ that hits $\ell_p$ is defined by two
 segments $\ell_i$ and $\ell_j$.}
\label{degenerated_fig4}
\end{figure}
Note that, Figure ~\ref{degenerated_fig4}(f) is basically the case B3, where $\ell_i$ is assumed to touch 
both the ``top'' and ``right'' boundaries of ${\cal S}_1$.

We use four arrays ${\cal L}_l$, ${\cal L}_r$, ${\cal L}_t$ and ${\cal L}_b$, each with 
the members in ${\cal L}$ sorted with respect to their left, right, top, 
and bottom end-points respectively. In addition, we keep a sorted array ${\cal L}_d$ containing 
the points of intersection of the line containing $\ell_p$ and the lines of slope 1 (called diagonal 
lines) at both the end-points of each member in ${\cal L}\setminus \{\ell_p\}$.
Each element $\ell_i \in \cal L$ maintains 
six pointers to the corresponding element in ${\cal L}_l$, ${\cal L}_r$, ${\cal L}_t$, ${\cal L}_b$ 
and to two elements of ${\cal L}_d$ corresponding to its two end-points. Also, each element of 
${\cal L}_i$, $i=l,r,t,b,d$ points to the corresponding segment $\ell \in \cal L$. In addition, 
we also maintain four ordered arrays, namely ${\cal I}^{v1}(\tau)$, ${\cal I}^{v2}(\tau)$
${\cal I}^h(\tau)$ and ${\cal I}^d(\tau)$ 
for each end-point $\tau$ of the members in $\cal L$.
${\cal I}^{v1}(\tau)$ (resp. ${\cal I}^{v2}(\tau)$) is the 
list of segments hit by an upward (resp. downward) vertical ray from $\tau$, and  
${\cal I}^h(\tau)$ (resp. ${\cal I}^d(\tau)$) is the list of segments in
$\cal L$ intersected by the {\it horizontal line} (resp. {\it diagonal line}) passing 
through the point $\tau$ in sorted order. Each segment 
$\ell_i \in \cal L$ maintains eight pointers to point the lists 
${\cal I}^{v1}(\tau)$, ${\cal I}^{v2}(\tau)$, ${\cal I}^{h}(\tau)$, 
${\cal I}^d(\tau)$, ${\cal I}^{v1}(\tau')$, ${\cal I}^{v2}(\tau')$,
${\cal I}^{h}(\tau')$ and ${\cal I}^d(\tau')$ where $\tau$ and 
$\tau'$ are two end-points of $\ell_i$. The arrays ${\cal L}_i$, $i=l,r,t,b,d$ can be created in 
$O(n\log n)$ time. Also, the arrays ${\cal I}^{v1}(\tau)$, ${\cal I}^{v2}(\tau)$, 
${\cal I}^{h}(\tau)$ and ${\cal I}^d(\tau)$ 
for all the $2n$ end-points ($\tau$) of the segments in $\cal L$ can be created in $O(n^2)$ time and 
will be stored using $O(n^2)$ space.

Let us now consider the generation of the instances in B1. Lemma \ref{cornerx} says that if $\ell_p$ 
exists, then the bottom-left corner of ${\cal S}_1$ lies on $\ell_p$. We first generate all possible 
bottom-left corners $\cal C$ of ${\cal S}_1$ on $\ell_p$ in sorted order whose bottom boundary is 
supported by the top end-point of a segment $\ell_i$ in $\cal L$ by traversing the list ${\cal L}_t$. For 
each element $\theta \in \cal C$ (correspnding to the top-end point of a line segment $\ell_i$), 
we consider a half-line $\lambda(\theta)$ of slope ``1'' at the point $\theta$, and generate the 
array ${\cal D}_\theta$ that contains the top-right corner of all possible squares ${\cal S}_1$ lying 
on $\lambda(\theta)$, in order of their distances from the point $\theta$ 
(see Figure~\ref{degenerate_D_theta}). We denote the horizontal line at 
$\theta$ by $h_\theta$. The elements (known as event points)
of the array ${\cal D}_\theta$ are the points of intersection of $\lambda(\theta)$ with 
\begin{itemize}
 \item[(i)] the vertical lines at the {\it left end-point} of all the segments in $\cal L$ whose left end-point lies 
 below the line $\lambda(\theta)$  and above the
 line $h_\theta$ (see {\it red} points e.g. $e^4_i$, $e^5_i$, $e^6_i$ in Figure~\ref{degenerate_D_theta}), 
 \item[(ii)] the vertical lines at the point of intersection of $h_\theta$ with the segments ${\cal L}' \subseteq \cal L$, provided the slope of the segments in ${\cal L}'$ are positive (see {\it blue} 
 points e.g. $e^1_i$ in Figure \ref{degenerate_D_theta}), 
 \item[(iii)] the horizontal line at the bottom end-point of all the segments whose bottom end-point lies above 
 $\lambda(\theta)$ (see {\it green} points e.g. $e^3_i$, $e^8_i$, $e^9_i$ in Figure \ref{degenerate_D_theta}), and  
 \item[(iv)] the segments in $\cal L$ with negative slope that intersects $\lambda(\theta)$ (see {\it pink} points $e^2_i$ in Figure~\ref{degenerate_D_theta}),
\end{itemize}

\begin{figure}
\centering
 \includegraphics[width=.5\textwidth]{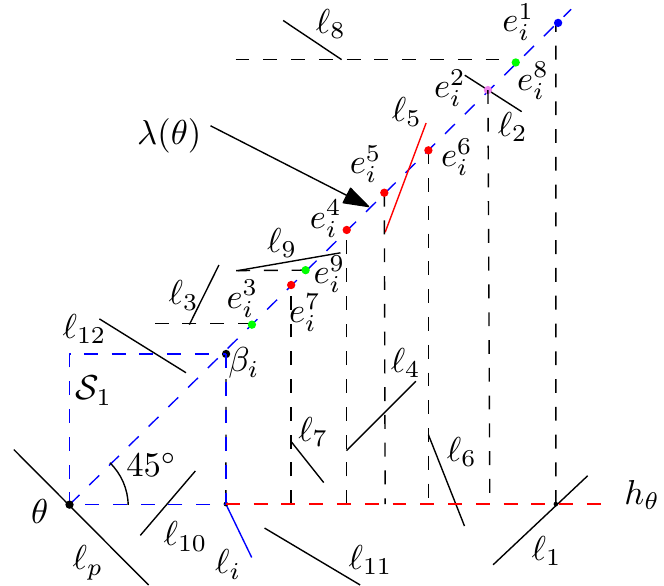}
\caption{Generation of ${\cal D}_\theta$ where $\theta$ is horizontal
projection of top end-point of $\ell_i$ on $\ell_p$}
\label{degenerate_D_theta}
 
\end{figure}
%
Since ${\cal S}_1$ hits $\ell_i$, we need to remove all the events generated on $\lambda(\theta)$ 
whose $x$-coordinates are less than that of the top end-point $\tau$
of $\ell_i$ (e.g. events for $\ell_{10}$, $\ell_{12}$
in Figure~\ref{degenerate_D_theta}). 

The Type (i) (resp. Type (iii)) events are generated in increasing order of their $x$-coordinates by 
scanning the array ${\cal L}_l$ (resp. ${\cal L}_b$). Type (ii) events are created in 
increasing order of $x$-coordinates from the list ${\cal I}^h(\tau)$, where
the horizontal projection of the top end-point $\tau$ of the line segment $\ell_i$
on $\ell_p$ is $\theta$. Type~(iv) events are identified from the two ordered 
arrays ${\cal I}^d(p_1)$ and 
${\cal I}^d(p_2)$ where $p_1$ and $p_2$ are two end-points
of (same or different) line segments that generated two consecutive event points $e$ and $e'$
in the array ${\cal L}_d$, and
$x(e) \leq x(\theta) \leq x(e')$.
Note that we need to consider only the
segments of negative slope 
in ${\cal I}^d(p_1)\cup{\cal I}^d(p_2)$ in ordered manner to compute Type~(iv).

Now, we merge the events of Types (i) to (iv) to get the list ${\cal D}_\theta$ containing all possible 
events on $\lambda_\theta$ arranged in increasing order of their $x$-coordinates. We process each event 
of  $\delta \in {\cal D}_\theta$ by executing the steps (i) compute an ${\cal S}_1$ square 
with (bottom-left, top-right) corners at $(\theta,\delta)$, (ii) identify the 
segments in ${\cal L}' \subseteq {\cal L}$ that are hit by ${\cal S}_1$, and (iii) for the remaining segments 
${\cal L}\setminus {\cal L}'$, we compute ${\cal S}_2$ in $O(1)$ amortized time as described below.
\begin{figure}
\centering
 \includegraphics[width=.9\textwidth]{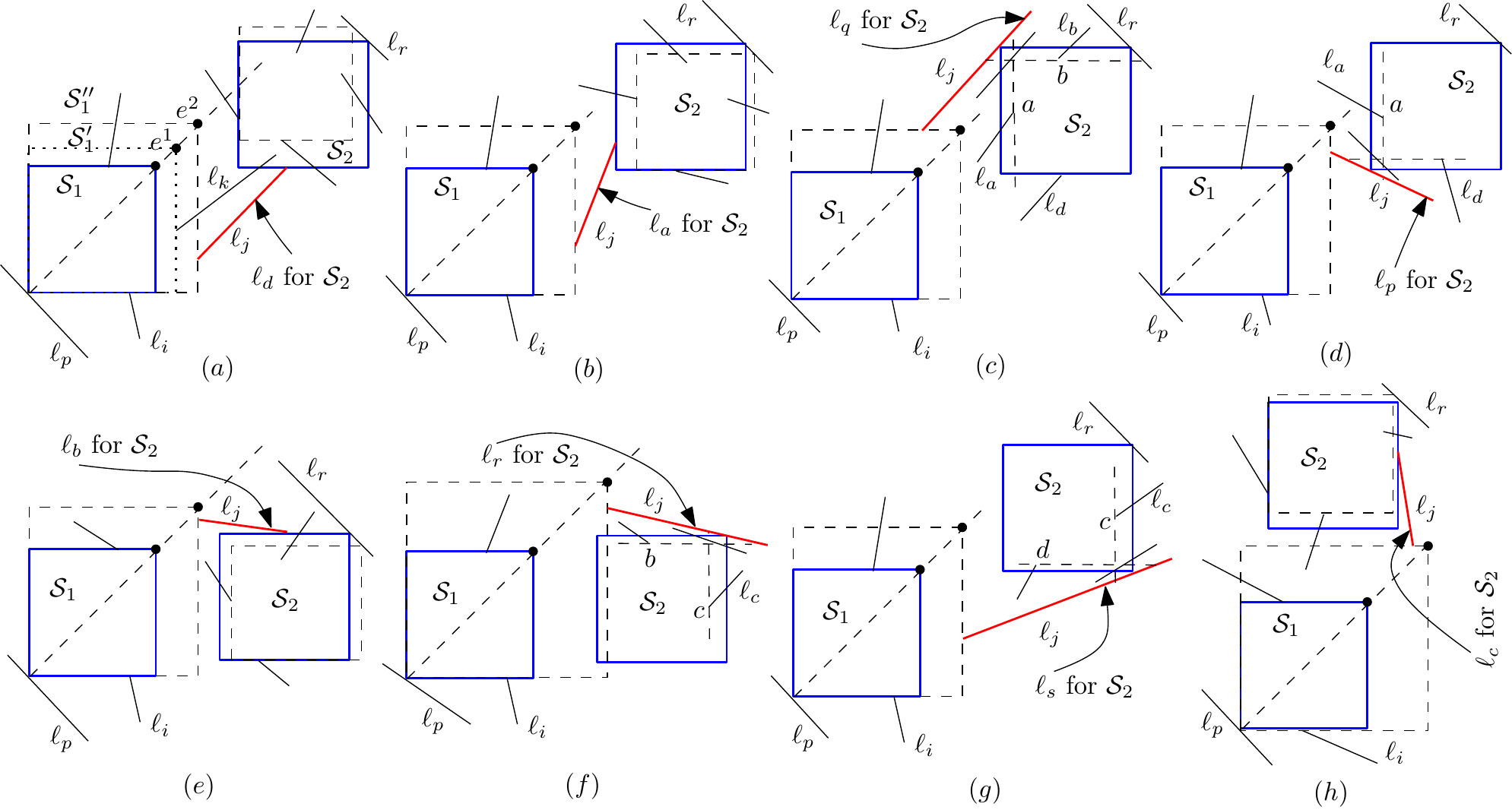}
\caption{Demonstration of Iterative steps of computing ${\cal S}_2$ for different elements of ${\cal D}_\theta$}
\label{x} 
\end{figure}

\begin{description}
 \item[Initialization step:] For the first event $\delta_1 \in  {\cal D}_\theta$, we apply the algorithm of Section 
 \ref{one-hit} to compute ${\cal S}_2$. This also identifies the segments $\ell_a, \ell_b, 
 \ell_c,\ell_d,\ell_p, \ell_q, \ell_r, \ell_s \in {\cal L}\setminus {\cal L}'$ 
 as defined in Lemma \ref{simple}. This needs $O(n)$ time.
 \item[Iterative step:] Below, we show that, after processing $\delta_i \in  {\cal D}_\theta$, 
 when we process 
 $\delta_{i+1}~\in~{\cal D}_\theta$ in order, at
 most one among the eight segments $\ell_a, \ell_b, 
 \ell_c,\ell_d,\ell_p, \ell_q, \ell_r, \ell_s \in {\cal L}\setminus {\cal L}'$ for ${\cal S}_2$ (see the eight situations 
 in Figure \ref{x}), may change, and it can be obtained in $O(1)$ time. 
\begin{description}
\item[] In Figure~\ref{x}(a), if ${\cal S}_1$ is increased to ${\cal S}'_1$ (dotted square),
then none of the $8$ segments of ${\cal S}_2$ gets changed.	
\item[] In Figure \ref{x}(a), if ${\cal S}_1$ is increased to ${\cal S}''_1$ (dashed square),
then $\ell_d$ of ${\cal S}_2$ gets changed,
which can be obtained by scanning ${\cal L}_t$ array.
\item[] In Figure \ref{x}(b)  $\ell_a$ of ${\cal S}_2$ gets changed, 
which can be obtained by scanning ${\cal L}_r$ array.
\item[] In Figure \ref{x}(c)  $\ell_q$ of ${\cal S}_2$ gets changed, 
which can be obtained by scanning ${\cal I}^{v1}(a)$ array.
\item[] In Figure \ref{x}(d)  $\ell_p$ of ${\cal S}_2$ gets changed,
which can be obtained by scanning ${\cal I}^{v2}(a)$ array.
\item[] In Figure \ref{x}(e)  $\ell_b$ of ${\cal S}_2$ gets changed, 
which can be obtained by scanning ${\cal L}_b$ array.
\item[] In Figure \ref{x}(f)  $\ell_r$ of ${\cal S}_2$ gets changed, 
which can be obtained by scanning ${\cal I}^{v1}(c)$  array.
\item[] In Figure \ref{x}(g)  $\ell_s$ of ${\cal S}_2$ gets changed, 
which can be obtained by scanning ${\cal I}^{v2}(c)$  array.
\item[] In Figure \ref{x}(h)  $\ell_c$ of ${\cal S}_2$ gets changed, 
which can be obtained by scanning ${\cal L}_l$ array.
\end{description}

\end{description}
\vspace{-0.1in}
The processing of all the elements in ${\cal D}_\theta$ needs exactly
one scan of the arrays ${\cal L}_b$, ${\cal L}_r$, ${\cal L}_t$,
${\cal L}_l$, ${\cal I}^{v1}(\tau)$, ${\cal I}^{v2}(\tau)$, ${\cal I}^h(\tau)$, 
${\cal I}^d(\tau)$, ${\cal I}^{v1}(\tau')$, ${\cal I}^{v2}(\tau')$. Thus, we can  compute the 
required ${\cal S}_2$ for each element in $\delta \in {\cal D}_\theta$ in amortized $O(1)$ time.
 The generation of the instances in B2 are similar to that of B1.  
To generate the instances of B3 with the segment $\ell_j$
on its right boundary, we need to consider a vertical line
$V_j$ at the left end-point on $\ell_j$, and 
include the horizontal projection of the bottom end-point 
of all the segments in ${\cal L} \setminus \{ \ell_p \} $ on $V_j$ provided the
concerned bottom end-points lie to the left of $V_j$
and above the left end-point of $\ell_j$. 
For all the segments in $\cal L$  with negative slope that 
intersects $V_j$ above the left end-point of $\ell_j$, we include those points of intersection 
in $V_j$. We also include the left end-point of $\ell_j$
as an event in $V_j$. These events can be generated in $O(n)$ 
time using the array ${\cal L}_b$. For each of these events the corresponding 
${\cal S}_1$ square and hence the correspnding ${\cal S}_2$ square are well-defined. The 
${\cal S}_2$ squares for all the events in $V_j$ can also be computed in $O(n)$ time. 
Thus, we have the following theorem:

\begin{theorem}
 If ${\cal R}_{abcd}$ does not hit all the line segments in $\cal L$,
 we can compute the optimal axis parallel square pair (${\cal S}_1$, ${\cal S}_2$)
 that combinedly hit all the segments in $\cal L$ in $O(n^2)$ time.
\end{theorem}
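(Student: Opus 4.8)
The plan is to assemble the theorem from the structural lemmas established above together with the explicit event-driven procedures of Cases~(A) and~(B). First I would invoke the reduction already in place: since ${\cal R}_{abcd}$ fails to be a hitting rectangle, at least one of $\ell_p,\ell_q,\ell_r,\ell_s$ exists, and in any optimal pair one of the two squares must hit it; relabelling the configuration (the four directions are symmetric) I may assume this square is ${\cal S}_1$ and that the segment it must hit is $\ell_p$. By Lemma~\ref{cornerx} the bottom-left corner $\pi$ of ${\cal S}_1$ then lies on $\ell_p$, and by Lemmas~\ref{lx} and~\ref{lxx} the optimal ${\cal S}_1$ is determined either by a single segment of ${\cal L}\setminus\{\ell_p\}$ (when $\pi$ is an end-point of $\ell_p$) or by two segments (when $\pi$ is interior). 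This dichotomy is what lets me enumerate a finite, $O(n^2)$-sized family of candidate squares ${\cal S}_1$ that is guaranteed to contain the optimal one.

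Next I would bound each branch of the enumeration. Case~(A), where ${\cal S}_1$ is defined by a single segment, is already charged at $O(n^2)$ by Lemma~\ref{one}. For Case~(B) I would first pay the $O(n\log n)$ cost of building the sorted arrays ${\cal L}_l,{\cal L}_r,{\cal L}_t,{\cal L}_b,{\cal L}_d$ and the $O(n^2)$ cost (in both time and space) of building the four interval lists ${\cal I}^{v1}(\tau),{\cal I}^{v2}(\tau),{\cal I}^h(\tau),{\cal I}^d(\tau)$ over all $2n$ end-points. In subcase~B1 there are $O(n)$ candidate corners $\theta \in {\cal C}$ on $\ell_p$ (one per top end-point, read off ${\cal L}_t$); for each $\theta$ the slope-$1$ ray $\lambda(\theta)$ carries $O(n)$ event points ${\cal D}_\theta$, assembled by merging the Type~(i)--(iv) streams, each produced in sorted order from the preprocessed arrays. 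I would then argue that a single initialization (Lemma~\ref{min_square}, cost $O(n)$) computes ${\cal S}_2$ and its eight defining segments $\ell_a,\dots,\ell_s$ for the first event, after which each successive event changes at most one of these eight segments, advancing it monotonically along the corresponding array or interval list. Hence processing all of ${\cal D}_\theta$ costs $O(n)$, so B1 costs $O(n^2)$; subcase~B2 is symmetric, and in B3 the $O(n)$ choices of the right-boundary segment $\ell_j$ each contribute $O(n)$ work on the vertical line $V_j$, again $O(n^2)$.

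For correctness I would note that, over all candidates generated in Cases~(A) and~(B), the enumeration is exhaustive for the optimal ${\cal S}_1$ by Lemmas~\ref{lx} and~\ref{lxx}; for each candidate ${\cal S}_1$ the set ${\cal L}\setminus{\cal L}'$ of unhit segments is well defined, and the optimal ${\cal S}_2$ hitting it is computed by the subroutine of Lemma~\ref{min_square}, whose output is unique up to the defining subset by Observation~\ref{basic}(i). Maintaining the running minimum of $\max(size({\cal S}_1),size({\cal S}_2))$ therefore returns the global optimum, and summing the branch costs with the preprocessing gives the claimed $O(n^2)$ bound.

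I expect the main obstacle to be the amortized analysis asserted in the iterative step of B1: one must verify that as ${\cal S}_1$ grows monotonically along $\lambda(\theta)$ the unhit set shrinks monotonically in precisely the manner catalogued in the eight cases of Figure~\ref{x}, so that at most one distinguished segment of ${\cal S}_2$ is displaced per event and its replacement is always the \emph{next} element in the relevant sorted structure (${\cal L}_t,{\cal L}_r,{\cal L}_b,{\cal L}_l$ or one of ${\cal I}^{v1}(a),{\cal I}^{v2}(a),{\cal I}^{v1}(c),{\cal I}^{v2}(c)$). Establishing that none of these eight pointers ever backtracks over a single sweep of ${\cal D}_\theta$ is exactly what collapses a naive $O(n)$-per-event recomputation into the $O(1)$ amortized bound, and hence the whole branch from $O(n^3)$ down to $O(n^2)$.
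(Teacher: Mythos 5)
Your proposal follows the paper's own proof essentially verbatim: the same reduction to Lemma~\ref{cornerx} and the dichotomy of Lemmas~\ref{lx}/\ref{lxx}, the same charging of Case~(A) to Lemma~\ref{one}, and the same $O(n^2)$ accounting of B1--B3 via the preprocessed arrays and the amortized $O(1)$-per-event update of the eight defining segments of ${\cal S}_2$. The only additions are your explicit remark on exhaustiveness of the enumeration and your (justified) flagging of the monotone-pointer claim in the iterative step as the delicate point, which the paper likewise asserts by case inspection of Figure~\ref{x} rather than proving in detail.
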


\begin{proof}
Lemma \ref{one} says that if the ${\cal S}_1$ square is
defined by one line segment in ${\cal L}\setminus \{\ell_p\}$,
we can compute the optimum pair of squares $({\cal S}_1,{\cal S}_2)$ 
in $O(n^2)$ time. The instances where  ${\cal S}_1$ is defined by two
line segments in ${\cal L}\setminus \{\ell_p\}$, are classified
into three cases B1, B2, B3. 
For handling the case B1, we created $O(n)$ events on $\ell_p$ 
in the array $\cal C$ in $O(n)$ time using the  ${\cal L}_t$ array.
These corresponds to the bottom left corner of possible ${\cal S}_1$.
For each event $\theta \in C$, we create another array ${\cal D}_\theta$ with $O(n)$ sub-events each 
of them may be the top-right corners of ${\cal S}_1$ square 
whose bottom-left corner is $\theta$. 
We can process these $O(n)$ events in ${\cal D}_\theta$ in amortized $O(n)$ time.
Thus, all possible instances of type B1 can be generated in $O(n^2)$ time. 
Similarly, all possible instances of type B2 also can be generated in 
$O(n^2)$ time. Regarding the instances of type B3, we need to consider the 
left end-points of all the $O(n)$ segments in $\cal L$. As mentioned 
earlier, the number of events (top-right corner of ${\cal S}_1$ squares) generated 
is $O(n)$, and they can be processed in amortized $O(n)$ time. 
In special case of B3 (see Figure~\ref{degenerated_fig4}(f)), both the top and right 
boundaries of the square ${\cal S}_1$
is touched by a segment $\ell_i$,
and the correspnding ${\cal S}_2$ can be determined in $O(n)$
time. Since there are $n$ such line segments $\ell_i\in {\cal L}$, the total 
time complexity result for identifying all such instances is also $O(n^2)$.
Thus the result follows.
\end{proof}

\vspace{-0.1in}

\appendix
\begin{figure}[h]
\centering
\vspace{-0.2in}
\includegraphics[width=0.9\textwidth]{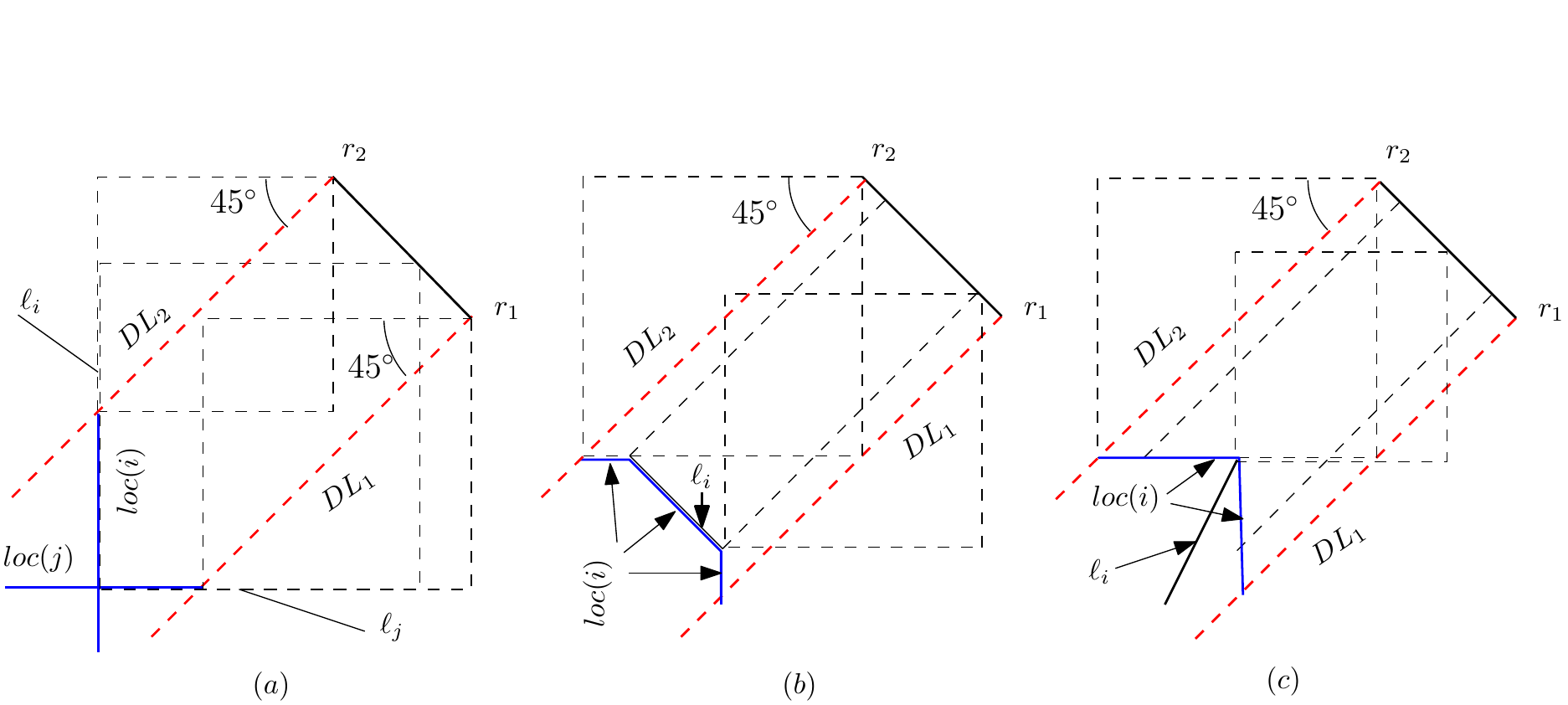}
\caption{The locus $loc(i)$ of the bottom left corner of square  that hits the 
segment $l_i$.}
\label{appendix_fig}
\end{figure}
{\bf Size of (i.e. the number of segments in) \boldmath $loc(i),~i = \{a,p,q,d,s,(b,c)\}$:}

The $loc(i)$ is the locus of the ``bottom-left'' corner of a minimum sized square ${\cal S}^r$
 which hits the line segment $\ell_i$, while its ``top-right'' corner moves along 
 the segment $\overline{r_2r_1}$ (The Figure~\ref{degenerated_fig11}(a) demonstrates $loc(s)$).
The $loc(i)$ (within the strip $\Gamma$ bounded by 
the line $DL_2$ and $DL_1$ of unit slope passing through $r_2$ and $r_1$ respectively)
is as follows:
\begin{itemize}
 \item If the segment $\ell_i$ (resp. $\ell_j$) lies above $DL_2$
 (resp. below $DL_1$),
then the required locus will be a vertical line (resp. horizontal line) inside the strip $\Gamma$ (see Figure~\ref{appendix_fig}(a)).
 \item If $\ell_i$ lies inside the strip $\Gamma$,
 then there are two possiblities:\\
(a) Slope of $\ell_i$ is negative (see Figure~\ref{appendix_fig}(b)): The 
required locus will be a horizontal segment 
passing through the top end-point of $\ell_i$ (to the left of it), 
until the bottom-left corner of the square coincides with the top end-point of $\ell_i$;
then it will move along $\ell_i$ till the bottom end-point
of $\ell_i$ is reached, and finally it will be vertically downwards,
until it hits the boundary of $\Gamma$.\\
(b) Slope of $\ell_i$ is positive (see Figure~\ref{appendix_fig}(c)): The 
required locus will be a horizontal segment as in case (a)
until the bottom-left corner of square hits the top end-point of $\ell_i$,
then finally it will be vertically downwards, until the boundary of $\Gamma$ is hit.
\item If $\ell_i$ intersects the boundary of $\Gamma$,
then also we can construct the required locus in a similar way as in the aforesaid cases.
\end{itemize}
Thus, in all the situations $loc(i)$ consists of at most three segments within $\Gamma$,
where at most one of them is non-axis-parallel.
\end{document}